\newcommand{\norm}[1]{\left\lVert#1\right\rVert}
\newcommand{\GGM}{GGM}
\newcommand{\RCON}{RCON}
\newcommand{\PDRCON}{pdRCON}
\newcommand{\pdglasso}{pdglasso}
\newcommand{\G}{\mathcal{G}}
\newcommand{\lambdadiag}{\lambda^{\texttt{diag}}_{1}}
\newcommand{\lambdasym}{\lambda^{\texttt{sym}}_{2}}
\DeclareMathOperator{\tr}{tr}
\DeclareMathOperator{\diag}{diag}
\DeclareMathOperator{\vd}{vd}
\DeclareMathOperator{\mle}{mle}
\DeclareMathOperator*{\argmin}{arg\,min}
\DeclareMathOperator{\vech}{vech}
\DeclareMathOperator{\myvec}{v}
\theoremstyle{plain}
\newtheorem{thm}{Theorem}[section]
\newtheorem{prop}[thm]{Proposition}
\theoremstyle{definition}
\newtheorem{example}{Example}[section]
\newenvironment{exmp}{\begin{example}}{\hfill\qedsymbol\end{example}}
\theoremstyle{remark}
\begin{document}

\title{On the application of Gaussian graphical models to paired data problems}

\author{Saverio Ranciati$^\ast$\\[4pt]
\textit{Department of Statistical Sciences,
University of Bologna,
Italy}\\[2pt]
{saverio.ranciati2@unibo.it}\\[6pt]
Alberto Roverato\\[4pt]
\textit{
Department of Statistical Sciences,
University of Padova,
Italy}\\[2pt]
{alberto.roverato@unipd.it}
}
\maketitle

\footnotetext{To whom correspondence should be addressed.}

\begin{abstract}
Gaussian graphical models are nowadays commonly applied to the comparison of groups sharing the same variables, by jointly learning their independence structures. We consider the case where there are exactly two dependent groups and the association structure is
represented by a family of coloured Gaussian graphical models suited to deal with paired data problems. To learn the two dependent graphs, together with their across-graph association structure, we implement a fused graphical lasso penalty. We carry out a comprehensive analysis of this approach, with special attention to the role played by some relevant submodel classes. In this way, we provide a broad set of tools for
the application of Gaussian graphical models to paired data problems. These include results useful for the specification of penalty values in order to obtain a path of lasso solutions and an ADMM algorithm that solves the fused graphical lasso optimization problem. Finally, we carry out a simulation study to compare our method with the traditional graphical lasso, and present an application of our method to cancer genomics where it is of interest to compare cancer cells with a control sample from histologically normal tissues adjacent to the tumor. All the methods described in this article are implemented in the \textsf{R} package \texttt{pdglasso} available at \url{https://github.com/savranciati/pdglasso}.
\end{abstract}
\noindent\emph{Keywords}: ADMM algorithm; Coloured Gaussian graphical model; Conditional independence; Fused lasso penalty; Graphical lasso;  Symmetry.

\section{Introduction}\label{SEC:introduction}
Graphical models are powerful tools for expressing the relationships between variables. In Gaussian graphical models (\GGM{s}) the dependency structure is obtained by associating an undirected graph to the concentration matrix, that is the inverse of the covariance matrix. The graph has one vertex for every variable and every missing edge implies that the corresponding entry of the concentration matrix is equal to zero; see \cite{lauritzen1996graphical}.

In recent years, there has been a great deal of interest in the joint learning of multiple networks, where the observations come from two or more groups sharing the same variables. For instance, \citet{danaher2014joint} considered the case of gene expression measurements collected from the cancer tissue of a sample of patients and from the normal tissue of a control sample. Hence, the association structure of each group is represented by a network and methods for the joint learning have been developed to deal with the fact that
networks are expected to share similar patterns while retaining individual features. In this framework, the literature has mostly focused on the case where the groups are independent so that every network is a distinct unit, disconnected from the other networks; see \citet{tsai2022joint} for a review. Thus, specific methods that deal with the across network association are required in the case where the groups are dependent. \citet{xie2016joint} considered the case of gene expression data obtained from multiple tissues from the same individual and modelled the cross-graph dependence between groups by means of a latent vector representing systemic variation  manifesting simultaneously in all groups; see also \citet{zhang2022bayesian}. \citet{roverato2022modelinclusion,roverato2024exploration} focused on paired data problems where there are exactly two dependent groups. Paired data are commonly originated from experimental designs where each subject is measured twice under two different conditions, or time points, as well as in matched pairs designs. For instance, in cancer genomics it is common practice to take control samples from histologically normal tissues adjacent to the tumor \citep{aran2017comprehensive}. \citet{roverato2022modelinclusion} approached the problem by considering a family of coloured \GGM{s} \citep{hojsgaard2008graphical}, named \PDRCON\ models, in which similarities between groups are represented by symmetries which can involve both the graph structure and the concentration values, in the form of equality constraints. One of the appealing features of this approach is that the resulting model has a graph for each of the two groups with the cross-graph dependence explicitly represented by the edges joining one group with the other, which may themselves present symmetries; we refer to Section~\ref{SEC:coloured.ggms} for details. In a related framework,
\citet{ranciati2021fused} considered the task of learning brain networks from fMRI data, and approached the problem by means of a fused graphical lasso procedure, named the symmetric graphical lasso, designed with the specific aim to identify symmetries between the left and the right hemisphere.

In this paper, we introduce the \emph{graphical lasso for paired data} (\pdglasso) that extends the symmetric graphical lasso to deal with the wider family of \PDRCON\ models, and then provide a set of methods which are meant to establish an extensive collection of tools for the immediate application of \GGM{s} to the analysis of paired data. We provide an alternating directions method of multiplier (ADMM) algorithm that solves the \pdglasso\ optimization problem and, furthermore, the \textsf{R}  package \citep{Rmanual2023} \texttt{pdglasso} that implements the ADMM algorithm as well as a function for the computation of maximum likelihood estimates, and other utility functions to deal efficiently with the objects resulting from the analysis. For details about the package we refer to Section `Code and data availability' of this paper. We analyse the role played by relevant submodel classes characterized by a fully symmetric structure. More specifically, full symmetry can be a property of the overall structure, or it may be confined to some specific components, such as the diagonal entries of the concentration matrix, the inside-group structure or the across-group structure. We provide results on the values of the penalty terms required to obtain either a diagonal, a block-diagonal or a fully symmetric solution, which are useful in the specification of the grid of penalty values required to obtain a path of \pdglasso\ solutions.  We then carry out a simulation study that shows that our pdglasso method has comparable performances to glasso when there are no symmetries, and improves on the latter when symmetries are present. Furthermore, we present an application to the analysis of gene expression data in cancer genomics.

The rest of this paper is organized as follows. Section~\ref{SEC:coloured.ggms} provides the background on coloured \GGM{s} and on graphical models for paired data, as required for this paper. The \pdglasso\ problem is introduced in
Section~\ref{SEC:pdglasso}, whereas Section~\ref{SEC:pdglasso.solutions} deals with the practical application of the method, including the maximum theoretical values of the penalty parameters. Section~\ref{SEC:submodel.classes} is devoted to some relevant families of submodels. The ADMM algorithm for the optimization of the penalized likelihood can be found in Section~\ref{SEC:appendix.algo}. The simulation study and the application to gene expression data are described in Section~\ref{SEC:simulations} and \ref{SEC:application}, respectively. Finally, Section~\ref{SEC:conclusions} contains a brief discussion. Proofs are deferred to Appendix~\ref{SEC:proofs}, whereas the remaining three appendices contain a comprehensive presentation of fully symmetric models, some illustrative examples, and some additional material concerning the application to gene expression data, respectively.

\section{Coloured Gaussian graphical models for paired data}\label{SEC:coloured.ggms}

Let $Y_{V}$ be a continuous random vector indexed by a finite set $V=\{1,\ldots, p\}$. We denote by $\Sigma=\{\sigma_{ij}\}_{i,j\in V}$ and $\Sigma^{-1}=\Theta=\{\theta_{ij}\}_{i,j\in V}$ the covariance and the concentration matrix of $Y_{V}$, respectively. Both $\Sigma$ and $\Theta$ belong to the set $\mathcal{S}^{+}_{p}$ of (symmetric) $p\times p$ positive definite matrices and we assume, without loss of generality, that $Y_{V}$ has zero mean vector.  An undirected graph with vertex set $V$ is a pair $\G=(V, E)$ where $E$ is an edge set that is a set of pairs of distinct vertices and, with a slight abuse of notation, we will sometimes write $\{i,j\}\in \G$, in place of $\{i,j\}\in E$, to mean that the edge $\{i,j\}$ belongs to $E$. We say that the concentration matrix $\Theta$ is adapted to a graph $\G=(V, E)$ if every missing edge of $\G$ corresponds to a zero entry in $\Theta$; formally, $\{i,j\}\not\in E$ implies $\theta_{ij}=0$ for every $i,j\in V$ with $i\neq j$. A Gaussian graphical model (\GGM) with graph $\G$ is the family of multivariate normal distributions for $Y_{V}$ whose concentration matrix is adapted to $\G$. These models are also known with the name of covariance selection models or concentration graph models \citep{lauritzen1996graphical}.

In paired data problems, the set $V$ is naturally partitioned into a $L$eft and a $R$ight block, $V=L\cup R$ with $|L|=|R|=q=p/2$ and every variable in $Y_{L}$ has a corresponding variable in $Y_{R}$. We set $i^{\prime}=i+q$ for every $i\in L$ and, without loss of generality, we index the variables so that $Y_{i}$ corresponds to $Y_{i^{\prime}}$ for every $i\in L$. In this way, it also holds that $L=\{1,\ldots, q\}$ and $R=\{q+1,\ldots, p\}$. Accordingly, the concentration matrix $\Theta$ can be naturally partitioned into four blocks,
\begin{align}\label{EQN:Theta.blocks}
    \Theta =
    \left(
    \begin{array}{cc}
    \Theta_{LL} & \Theta_{LR}\\
    \Theta_{RL} & \Theta_{RR}\\
    \end{array}
    \right).
\end{align}

The  subgraph of $\G$ induced by a subset $A\subseteq V$ is denoted by $\G_{A}=(A, E_{A})$, where $\{i,j\}\in E_{A}$ if and only if both $i,j\in A$ and $\{i,j\}\in E$. If $\Theta$ is adapted to $\G$ then it holds that $\Theta_{LL}$ and $\Theta_{RR}$ are adapted to $\G_{L}$ and $\G_{R}$, respectively. Thus, $\G_{L}$ and $\G_{R}$ are the group specific graphs and the interest is for similarities involving the independence structures of the two groups, that we call \emph{structural symmetries}. We distinguish between two types of structural symmetries. When for a pair $i,j\in L$, with $i\neq j$  the edges $\{i,j\}$ and $\{i^{\prime},j^{\prime}\}$ are either both present or both missing in $\G_{L}$ and $\G_{R}$, respectively, we say that there is an \emph{inside-block} structural symmetry. On the other hand, $Y_{L}$ and $Y_{R}$ are not expected to be independent, and symmetries can also appear in the cross-group association structure, thereby involving the edges connecting the vertices in $L$ with the vertices in $R$. Hence, we say that an \emph{across-block} structural symmetry is present if the edges $\{i, j^{\prime}\}$ and $\{i^{\prime}, j\}$ are either both present or both missing in $\G$.

In a \GGM\ every  entry of the concentration matrix can be naturally associated with either a vertex or an edge of the graph and, thus,
structural symmetries in the graph imply that the corresponding entries in the concentration matrix are both either zero or non-zero. Hence, a stronger class of models could be defined  that specifies equality constraints between such entries. Equality constraints encode stronger similarities and result in more parsimonious models. From this viewpoint, every structural symmetry due to a pair of missing edges also corresponds to a \emph{parametric symmetry} because if, for example, both
$\{i,j\}\not\in \G_{L}$ and $\{i^{\prime},j^{\prime}\}\not\in \G_{R}$, then also the associated parameters have the same value, $\theta_{ij}=\theta_{i^{\prime}j^{\prime}}=0$. This idea can be extended to symmetric pairs of non-missing edges, and we say that there is an inside-block parametric symmetry if for a pair $i,j\in L$, with $i\neq j$, it holds that $\theta_{ij}=\theta_{i^{\prime}j^{\prime}}$. On the other hand, across-block parametric symmetries are also possible in the case where, for some $i,j\in L$, it holds that $\theta_{ij^{\prime}}=\theta_{i^{\prime}j}$, with $i\neq j$. Finally, parametric symmetries can also be satisfied by the diagonal entries of $\Theta$ and we say that there is a \emph{vertex parametric symmetry} if for $i\in L$ it holds that $\theta_{ii}=\theta_{i^{\prime}i^{\prime}}$.
Appendix~\ref{SEC:appendix.examples} gives some examples of \PDRCON\ models with a detailed description of the different types of symmetries they may represent.

Gaussian graphical models with additional restrictions on the parameter space were introduced in the seminal paper  by \citet{hojsgaard2008graphical} with the name of coloured \GGM{s}. The family of coloured \GGM{s} characterized by equality \emph{R}estrictions on \emph{CON}centration values are known as \RCON\ models, whereas the subfamily of \RCON\ models comprising the equality constraints suited for paired data problems, as described above, was introduced by \citet{roverato2022modelinclusion} with the name of \RCON\ models for paired data (\PDRCON\ models); see also \citet{ranciati2021fused} for an application of \RCON\ models for paired data that only involves inside-block and vertex symmetries.

Coloured \GGM{s} are typically represented by coloured graphs where vertices and edges depicted in black identify unconstrained parameters, whereas other colours are used to identify subsets of parameters which are constrained to having the same value. In this way, a different colour is required for every distinct equality constraint. We note, however, that this representation is redundant in the case of \PDRCON\ models, because equality constraints may only involve specific pairs of parameters. Indeed, in order to identify the model from the graph it is sufficient to be able to distinguish whether a parameter is constrained or not. It follows that a \PDRCON\ model can be unambiguously identified also if all the non-black vertices and edges are depicted by the same colour. This makes the graphical representation more readable for large graphs and, thus, here we will follow this rule and simply refer to coloured vertices and edges in contrast to uncoloured (i.e. black) ones. In Appendix~\ref{SEC:appendix.examples}, in order to make graphs readable also in black and white printing, we distinguish between uncoloured and coloured edges by using thin and thick lines, respectively.
\section{The graphical lasso for paired data}\label{SEC:pdglasso}
For a sample $y^{(1)}_{V},\ldots,y^{(n)}_{V}$ of i.i.d. observations of $Y_{V}\sim N(0, \Sigma)$, the maximum likelihood estimator of $\Theta$ is the value that maximises  the log-likelihood function
\begin{align}\label{EQN:log-lik}
l(\Theta)=\log\det(\Theta)-\tr(S\Theta)
\end{align}
over $\mathcal{S}^{+}_{p}$ where $S=\{s_{ij}\}_{i,j\in V}$ is the matrix $S=n^{-1}\sum_{i=1}^{n}y^{(i)}_{V}(y^{(i)}_{V})^{\top}$. In \GGM{s} the interest is for the zero pattern of $\Theta$ and \citet{yuan2007model} proposed an estimator of $\Theta$ obtained from the minimization of the penalized log-likelihood
\begin{align}\label{EQN:glasso}
\mathcal{L}_{\lambda_{1}}(\Theta)=-l(\Theta)+\mathcal{P}_{\lambda_{1}}(\Theta)
\end{align}
with $\mathcal{P}_{\lambda_{1}}(\Theta)=\lambda_{1}\norm{\Theta}_{1}$, where
$\norm{\cdot}_{1}$ denotes the $\ell_{1}$-norm, that is the sum of the absolute values of the entries of the matrix, and $\lambda_{1}$ is a nonnegative regularization parameter. The minimization of (\ref{EQN:glasso}) over $\mathcal{S}^{+}_{p}$ is known as the \emph{graphical lasso} problem and a number of algorithms have been proposed for its solution; see, among others, \citet{friedman2008sparse} and \citet[][Section~6.5]{boyd2011distributed}. Unlike the maximum likelihood estimator, for $\lambda_{1}>0$ the solution to the graphical lasso exists also when $p>n$, in which case $S$ is singular. The high popularity of the graphical lasso is due to the fact that it simultaneously performs estimation and model selection within the family of \GGM{s}. Indeed, due to the geometry of the $\ell_{1}$-penalty, some of the off-diagonal entries of the concentration matrix are shrunk to exactly zero, with a non-decreasing level of sparsity as $\lambda_{1}$ increases.

In order to simultaneously perform estimation and model selection within the family of \PDRCON\ models we introduce an additional penalty term
\begin{align}\label{EQN:fused-penalty}
\mathcal{Q}_{\lambda_{2}}(\Theta)=
\lambda_{2}\left(
\norm{\Theta_{LL}-\Theta_{RR}}_{1}+\norm{\Theta_{LR}-\Theta_{RL}}_{1}
\right)
\end{align}
to (\ref{EQN:glasso}) so as to obtain,
\begin{align}\label{EQN:pdglasso}
\mathcal{L}_{\lambda_{1},\lambda_{2}}(\Theta)=-l(\Theta)+\mathcal{P}_{\lambda_{1}}(\Theta)+\mathcal{Q}_{\lambda_{2}}(\Theta),
\end{align}
and estimate $\Theta$ by the \emph{paired data graphical lasso} (\pdglasso) estimator $\widehat{\Theta} =\argmin_{\Theta} \mathcal{L}_{\lambda_{1},\lambda_{2}}(\Theta)$, where minimization is taken over $\mathcal{S}^{+}_{p}$.

Note that $\mathcal{Q}_{\lambda_{2}}(\Theta)$ in (\ref{EQN:fused-penalty}) is a fused type lasso penalty \citep{tibshirani2005sparsity,hoefling2010path} with regularization parameter $\lambda_{2}\geq 0$. For $\lambda_{2}=0$ the \pdglasso\ estimator coincides with the graphical lasso estimator whereas positive values of $\lambda_{2}$ encourage parametric symmetries. More specifically,  the term
$\lambda_{2}\norm{\Theta_{LL}-\Theta_{RR}}_{1}=\lambda_{2}\sum_{i\in L}|\theta_{ii}-\theta_{i^{\prime}i^{\prime}}|+\lambda_{2}\sum_{i,j\in L; i\neq j}|\theta_{ij}-\theta_{i^{\prime}j^{\prime}}|$ encourages both vertex parametric symmetries and inside-block parametric symmetries, whereas
$\lambda_{2}\norm{\Theta_{LR}-\Theta_{RL}}_{1}=\lambda_{2}\sum_{i,j\in L}|\theta_{ij^{\prime}}-\theta_{i^{\prime}j}|$ encourages across-block parametric symmetries; we remark that, in the latter sum, $|\theta_{ij^{\prime}}-\theta_{i^{\prime}j}|=0$ whenever $i=j$ because $\Theta$ is a symmetric matrix. Hence, like the graphical lasso, the \pdglasso\ is a sparse estimate of $\Theta$ when $\lambda_{1}$ is large, and has many parametric symmetries when $\lambda_{2}$ is large.

The idea of using a fused penalty to induce equality constraints between concentration parameters is not new, and it was first introduced by \citet{danaher2014joint} in the context of joint learning of multiple \GGM{s} for independent samples. Subsequently,
\citet{ranciati2021fused} considered the case of two possibly non independent groups and the \pdglasso\ problem in
(\ref{EQN:pdglasso}) extends the work of \citet{ranciati2021fused} also including cross-group symmetries.

\section{Selection of penalty parameters and application issues}\label{SEC:pdglasso.solutions}
This section deals with the practical application of the method. Specifically, we provide the maximum theoretical values of the penalty parameters, give details on the computation of the eBIC criterion we use for the selection of the model and, finally, discuss the role played by the unit of measurement of the variables.

The application of the \pdglasso\ requires the initial definition of a grid of penalty parameter values to obtain a path of \pdglasso\ solutions. To this aim, it is useful to identify the values of $\lambda_{1}$ and $\lambda_{2}$ that return a diagonal and a fully symmetric solution, respectively, to be set as maximum values of the grid. Formally, we say that a \PDRCON\ model is fully symmetric if and only if both $\Theta_{LL}=\Theta_{RR}$ and $\Theta_{LR}=\Theta_{RL}$. We remark that the equality $\Theta_{LL}=\Theta_{RR}$ implies (i) full vertex parametric symmetry in the sense that $\theta_{ii}=\theta_{i^{\prime}i^{\prime}}$ for every $i\in L$ and (ii) both structural and parametric inside-block symmetry because $\theta_{ij}=\theta_{i^{\prime}j^{\prime}}$ for every $i,j\in L$ with $i\neq j$ also implies that $\theta_{ij}=0$ if and only if $\theta_{i^{\prime}j^{\prime}}=0$. Similarly, the equality  $\Theta_{LR}=\Theta_{RL}$ implies both structural and parametric across-block symmetry. Hence, we have the following,
\begin{thm}\label{THM:hat.fully.sym}
A sufficient condition for the solution $\widehat{\Theta}$ to the \pdglasso\ to be fully symmetric  is that $\lambda_{2}\geq\lambdasym$, where
$\lambdasym=
\max\left\{
|s_{ij}-s_{i^{\prime}j^{\prime}}|/2,\;
|s_{i^{\prime}j}-s_{ij^{\prime}}|/2;\;\; i,j\in L
\right\}$.
\end{thm}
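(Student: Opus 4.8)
The plan is to exploit the fact that full symmetry is precisely invariance under the block swap. Let $P=\left(\begin{smallmatrix} 0 & I_q \\ I_q & 0\end{smallmatrix}\right)$ be the symmetric orthogonal permutation matrix that exchanges the blocks $L$ and $R$, so that conjugation gives $P\Theta P=\left(\begin{smallmatrix}\Theta_{RR} & \Theta_{RL}\\ \Theta_{LR} & \Theta_{LL}\end{smallmatrix}\right)$. A matrix $\Theta\in\mathcal{S}^{+}_{p}$ is then fully symmetric, in the sense that $\Theta_{LL}=\Theta_{RR}$ and $\Theta_{LR}=\Theta_{RL}$, if and only if $P\Theta P=\Theta$. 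Given any $\Theta\in\mathcal{S}^{+}_{p}$, I would introduce its symmetrization $\widetilde{\Theta}=\tfrac{1}{2}(\Theta+P\Theta P)$, which again belongs to $\mathcal{S}^{+}_{p}$ (a convex combination of two positive definite matrices, since $P\Theta P\succ 0$ by congruence) and is fully symmetric by construction.

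The strategy is to show that $\mathcal{L}_{\lambda_{1},\lambda_{2}}(\widetilde{\Theta})\leq \mathcal{L}_{\lambda_{1},\lambda_{2}}(\Theta)$ for every $\Theta$ whenever $\lambda_{2}\geq\lambdasym$, and then to conclude by uniqueness of the minimizer: since $-l(\Theta)=-\log\det(\Theta)+\tr(S\Theta)$ is strictly convex on $\mathcal{S}^{+}_{p}$ while the penalties are convex, $\mathcal{L}_{\lambda_{1},\lambda_{2}}$ admits at most one minimizer $\widehat{\Theta}$; applying the inequality at $\Theta=\widehat{\Theta}$ forces $\widetilde{\widehat{\Theta}}$ to be a minimizer as well, hence $\widetilde{\widehat{\Theta}}=\widehat{\Theta}$ and $\widehat{\Theta}$ is fully symmetric. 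Three of the four terms in $\mathcal{L}_{\lambda_{1},\lambda_{2}}$ behave well under symmetrization: since $\det(P\Theta P)=\det(\Theta)$, concavity of $\log\det$ gives $\log\det(\widetilde{\Theta})\geq\log\det(\Theta)$, so the $-\log\det$ term does not increase; convexity of the $\ell_{1}$-norm together with $\norm{P\Theta P}_{1}=\norm{\Theta}_{1}$ gives $\norm{\widetilde{\Theta}}_{1}\leq\norm{\Theta}_{1}$, so $\mathcal{P}_{\lambda_{1}}$ does not increase; and $\mathcal{Q}_{\lambda_{2}}(\widetilde{\Theta})=0$ because $\widetilde{\Theta}$ is fully symmetric.

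The only term that may increase is the linear one, and controlling it is the heart of the proof. A direct computation gives $\tr(S\widetilde{\Theta})-\tr(S\Theta)=\tfrac{1}{2}\tr\bigl((PSP-S)\Theta\bigr)$, and a block expansion yields
\[
\tr\bigl((PSP-S)\Theta\bigr)=\tr\bigl((S_{RR}-S_{LL})(\Theta_{LL}-\Theta_{RR})\bigr)+\tr\bigl((S_{RL}-S_{LR})(\Theta_{RL}-\Theta_{LR})\bigr).
\]
I would then apply the elementary bound $|\tr(AB)|\leq(\max_{i,j}|A_{ij}|)\,\norm{B}_{1}$ to each term. The relevant entrywise maxima are exactly $\max_{i,j\in L}|s_{ij}-s_{i'j'}|$ and $\max_{i,j\in L}|s_{i'j}-s_{ij'}|$, each of which is at most $2\lambdasym$ by the definition of $\lambdasym$. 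Hence $\tfrac{1}{2}\tr\bigl((PSP-S)\Theta\bigr)\leq\lambdasym\bigl(\norm{\Theta_{LL}-\Theta_{RR}}_{1}+\norm{\Theta_{LR}-\Theta_{RL}}_{1}\bigr)\leq\mathcal{Q}_{\lambda_{2}}(\Theta)$ as soon as $\lambda_{2}\geq\lambdasym$. Combining the four estimates gives $\mathcal{L}_{\lambda_{1},\lambda_{2}}(\widetilde{\Theta})\leq\mathcal{L}_{\lambda_{1},\lambda_{2}}(\Theta)$, which completes the argument.

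The main obstacle is precisely this last estimate: one must recognize that symmetrizing $\Theta$ costs nothing in the log-determinant and $\ell_{1}$ terms while saving the entire fused penalty $\mathcal{Q}_{\lambda_{2}}(\Theta)$, and that this saving dominates the possible increase in the linear term exactly when the per-block coefficients $\tfrac{1}{2}\max_{i,j\in L}|s_{ij}-s_{i'j'}|$ and $\tfrac{1}{2}\max_{i,j\in L}|s_{i'j}-s_{ij'}|$ are absorbed by $\lambda_{2}$, which is what pins down the threshold $\lambdasym$. A secondary point requiring care is the passage from ``$\widetilde{\Theta}$ is no worse'' to ``$\widehat{\Theta}$ is itself fully symmetric'', for which the strict convexity of $-l$ and the resulting uniqueness of the minimizer are essential.
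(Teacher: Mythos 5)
Your proof is correct and takes essentially the same route as the paper's: symmetrize $\Theta$ via the block-swap permutation, show that no component of the objective increases under symmetrization --- with the fused penalty $\mathcal{Q}_{\lambda_{2}}(\Theta)$ absorbing the possible increase in the trace term exactly when $\lambda_{2}\geq\lambdasym$ --- and conclude that the minimizer must be fully symmetric. The only differences are in execution rather than strategy: you invoke concavity of $\log\det$ where the paper uses Minkowski's determinant inequality, you bound the trace term at the matrix level via $|\tr(AB)|\leq(\max_{i,j}|A_{ij}|)\norm{B}_{1}$ where the paper expands it entrywise, and you make explicit the final uniqueness step (via strict convexity of $-l$) that the paper leaves implicit.
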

\begin{proof}
See Appendix~\ref{SEC:appendix.proof.main.theorem}.
\end{proof}

The \pdglasso\ problem coincides with the graphical lasso in the case where $\lambda_{2}$ is set to zero, and it is a well-known result that if $\lambda_{1}\geq \max_{i,j\in V; i\neq j}|s_{ij}|$ then the solution to the graphical lasso is a diagonal matrix so that the selected graph is fully disconnected; see among others \citet[][Section~2.1]{mazumder2012exact}.  The following proposition shows that this is still true in the \pdglasso\ problem for any $\lambda_{2}\geq 0$.
\begin{prop}\label{THM:lambda.diag.solution}
A sufficient condition for the solution  $\widehat{\Theta}$ to the \pdglasso\ to be a diagonal matrix is that $\lambda_{1}\geq \lambdadiag$, where
$\lambdadiag = \max\left\{|s_{ij}|;\;\; i,j\in V\mbox{ with }i\neq j\right\}$.

\end{prop}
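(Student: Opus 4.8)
The plan is to prove the statement by a direct comparison argument rather than through KKT/subgradient conditions. Specifically, whenever $\lambda_{1}\geq\lambdadiag$, I will show that replacing any candidate $\Theta\in\mathcal{S}^{+}_{p}$ by the diagonal matrix $D=\diag(\theta_{11},\ldots,\theta_{pp})$ built from its own diagonal entries does not increase the objective $\mathcal{L}_{\lambda_{1},\lambda_{2}}$, and strictly decreases it unless $\Theta$ is already diagonal. Since $\Theta$ is positive definite its diagonal entries satisfy $\theta_{ii}>0$, so $D$ is itself positive definite and hence feasible; it then follows immediately that the minimizer must be diagonal. The analytic engine is Hadamard's inequality for positive definite matrices.

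First I would decompose $\mathcal{L}_{\lambda_{1},\lambda_{2}}(\Theta)$ into terms involving only the diagonal entries $\theta_{ii}$ and terms involving only the off-diagonal entries. Writing $\tr(S\Theta)=\sum_{i}s_{ii}\theta_{ii}+\sum_{i\neq j}s_{ij}\theta_{ij}$ and $\norm{\Theta}_{1}=\sum_{i}\theta_{ii}+\sum_{i\neq j}|\theta_{ij}|$, the diagonal-only contributions collect into $\sum_{i}(s_{ii}+\lambda_{1})\theta_{ii}$. The fused penalty $\mathcal{Q}_{\lambda_{2}}$ must be split the same way: its vertex-symmetry component $\lambda_{2}\sum_{i\in L}|\theta_{ii}-\theta_{i^{\prime}i^{\prime}}|$ depends only on diagonal entries, while the inside-block and across-block components depend only on off-diagonal entries. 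Because $D$ and $\Theta$ share the same diagonal, every diagonal-only term — including the vertex-symmetry part of $\mathcal{Q}_{\lambda_{2}}$ — takes identical values at $\Theta$ and at $D$, so these terms drop out of the comparison.

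It then remains to control two pieces. For the determinant term I would invoke Hadamard's inequality, $\det(\Theta)\leq\prod_{i}\theta_{ii}$, with equality if and only if $\Theta$ is diagonal; hence $-\log\det(\Theta)\geq-\sum_{i}\log\theta_{ii}=-\log\det(D)$, strictly unless $\Theta$ is diagonal. For the off-diagonal entries I would group the linear and $\ell_{1}$ terms as $s_{ij}\theta_{ij}+\lambda_{1}|\theta_{ij}|\geq(\lambda_{1}-|s_{ij}|)|\theta_{ij}|\geq 0$, which is exactly where the hypothesis $\lambda_{1}\geq\lambdadiag\geq|s_{ij}|$ enters, and I would note that the remaining off-diagonal fused terms are nonnegative by inspection. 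Every off-diagonal contribution vanishes at $D$. Combining the pieces yields $\mathcal{L}_{\lambda_{1},\lambda_{2}}(\Theta)\geq\mathcal{L}_{\lambda_{1},\lambda_{2}}(D)$, with strict inequality whenever $\Theta$ is not diagonal, which gives the claim.

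The main obstacle, and the step I would be most careful about, is the bookkeeping of the fused penalty $\mathcal{Q}_{\lambda_{2}}$. It is tempting to assert that $\mathcal{Q}_{\lambda_{2}}$ vanishes at a diagonal matrix, but this is false: the vertex-symmetry term $\lambda_{2}\sum_{i\in L}|\theta_{ii}-\theta_{i^{\prime}i^{\prime}}|$ generally does not vanish at $D$. The correct observation is that this term depends on the diagonal alone and is therefore identical at $\Theta$ and at $D$, so it is inert in the comparison; only the off-diagonal fused terms play a role, and those are nonnegative and do vanish at $D$. Everything else is routine, and the argument is uniform in $\lambda_{2}$, matching the requirement that the conclusion hold for every $\lambda_{2}\geq 0$.
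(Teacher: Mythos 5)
Your proof is correct and takes essentially the same route as the paper's: both replace an arbitrary positive definite $\Theta$ by its diagonal part, invoke Hadamard's inequality for the log-determinant, absorb the off-diagonal trace terms into the $\ell_{1}$ penalty using $\lambda_{1}\geq |s_{ij}|$, and note that the vertex-symmetry component of the fused penalty is unchanged while the remaining fused terms can only decrease. Your additional observation that the inequality is strict unless $\Theta$ is already diagonal (via the equality case of Hadamard) is a minor refinement that the paper leaves implicit.
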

\begin{proof}
See Appendix~\ref{SEC:proof.diagonal.sol}.
\end{proof}

For the sake of completeness, we also consider the case where the solution $\widehat{\Theta}$ to the \pdglasso\ is block diagonal with all the entries of $\widehat{\Theta}_{LR}$ equal to zero. This is of interest because $\widehat{\Theta}_{LR}=O_{q}$ implies that there are no across-block edges so that the two groups are independent. For the case where $\lambda_{2}=0$, i.e. in the graphical lasso, this problem was considered by \citet{witten2011new} and \citet{mazumder2012exact} where it is proved that such block diagonal structure is obtained whenever $\lambda_{1}\geq \max_{i,j\in L}|s_{ij^{\prime}}|$. Here, we show that the latter result is still true in the \pdglasso\ problem for any $\lambda_{2}\geq 0$.
\begin{prop}\label{THM:lambda.block.solution}
A sufficient condition for the solution $\widehat{\Theta}$ to the \pdglasso\ to be block diagonal with blocks $\widehat{\Theta}_{LL}$ and $\widehat{\Theta}_{RR}$ is that $\lambda_{1}\geq |s_{ij}|$ for every $i\in L$ and $j\in R$.
\end{prop}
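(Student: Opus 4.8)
The plan is to adapt the argument used by \citet{witten2011new} and \citet{mazumder2012exact} for the ordinary graphical lasso, accounting for the extra fused term $\mathcal{Q}_{\lambda_{2}}$. Since $\mathcal{L}_{\lambda_{1},\lambda_{2}}$ is strictly convex on $\mathcal{S}^{+}_{p}$ (the negative log-likelihood is strictly convex and both penalties are convex), a matrix is \emph{the} global minimizer $\widehat{\Theta}$ if and only if it satisfies the first-order subgradient optimality conditions; hence it suffices to exhibit one block-diagonal matrix that satisfies them. First I would let $\widetilde{\Theta}$ be the minimizer of $\mathcal{L}_{\lambda_{1},\lambda_{2}}$ over the set of block-diagonal matrices, those with $\Theta_{LR}=O_{q}$. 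On this set the across-block term $\norm{\Theta_{LR}-\Theta_{RL}}_{1}$ vanishes, so the restricted problem is a well-posed graphical-lasso-type problem in the blocks $\Theta_{LL},\Theta_{RR}$ and admits a positive definite solution. The goal is then to show that, under the stated hypothesis, $\widetilde{\Theta}$ also solves the unrestricted \pdglasso\ problem.

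Next I would write the subgradient optimality condition of the full problem using the matrix-gradient $\nabla\{\tr(S\Theta)-\log\det\Theta\}=S-\Theta^{-1}$, namely
\[
S-\Theta^{-1}+\lambda_{1}\Gamma+\Psi=O_{p},
\]
where $\Gamma$ is a subgradient of $\norm{\Theta}_{1}$, so that $\Gamma_{ij}=\operatorname{sign}(\theta_{ij})$ when $\theta_{ij}\neq 0$ and $\Gamma_{ij}\in[-1,1]$ otherwise, and $\Psi$ is a subgradient of $\mathcal{Q}_{\lambda_{2}}$. The conditions on the diagonal and on the $LL$ and $RR$ blocks hold automatically: at the block-diagonal point one has $(\widetilde{\Theta}^{-1})_{LL}=(\widetilde{\Theta}_{LL})^{-1}$ and $(\widetilde{\Theta}^{-1})_{RR}=(\widetilde{\Theta}_{RR})^{-1}$, the term $\norm{\Theta_{LR}-\Theta_{RL}}_{1}$ does not involve the in-block entries, and therefore the in-block stationarity equations of the full problem coincide with those of the restricted problem, which $\widetilde{\Theta}$ solves by construction.

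The crux is the condition for the off-diagonal block. Here I would use two facts. First, since $\widetilde{\Theta}$ is block-diagonal, so is its inverse, whence $(\widetilde{\Theta}^{-1})_{LR}=O_{q}$. Second, the across-block component of $\Psi$ may be chosen equal to $O_{q}$: each summand $|\theta_{ij^{\prime}}-\theta_{i^{\prime}j}|$ of $\norm{\Theta_{LR}-\Theta_{RL}}_{1}$ is evaluated where its argument is zero, and the subdifferential of the absolute value at zero contains $0$. With these choices the $LR$ block of the stationarity equation reduces to $S_{LR}+\lambda_{1}\Gamma_{LR}=O_{q}$, which is solved by $(\Gamma_{LR})_{ij}=-s_{ij^{\prime}}/\lambda_{1}$; since all entries of $\widetilde{\Theta}_{LR}$ are zero, this is an admissible subgradient precisely when $|s_{ij^{\prime}}|\leq\lambda_{1}$ for all $i,j\in L$, which is exactly the hypothesis (recall that, as $i$ ranges over $L$ and $j^{\prime}$ over $R$, the $s_{ij^{\prime}}$ are the cross-block entries $s_{ij}$ with $i\in L$, $j\in R$). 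All the subgradient choices in different blocks are mutually consistent, so $\widetilde{\Theta}$ satisfies the optimality conditions of the unrestricted problem and hence equals $\widehat{\Theta}$.

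I expect the main obstacle to be the careful bookkeeping of the fused subgradient $\Psi$: one must verify that its across-block component can be taken to vanish at the candidate without disturbing the in-block conditions, and that this choice is simultaneously valid for the coupled pair $(\theta_{ij^{\prime}},\theta_{i^{\prime}j})$ entering the same summand. This is precisely what keeps the plain graphical-lasso threshold $\lambda_{1}\geq|s_{ij^{\prime}}|$ sufficient for \emph{every} $\lambda_{2}\geq 0$; indeed, a nonzero admissible $\Psi_{LR}$ could only enlarge the set of solvable instances, so it is never needed for the sufficiency claim.
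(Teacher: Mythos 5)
Your proof is correct, but it follows a genuinely different route from the paper's. You verify the subgradient (KKT) optimality conditions of the full \pdglasso\ problem at the minimizer $\widetilde{\Theta}$ of the problem restricted to block-diagonal matrices, in the spirit of \citet{witten2011new} and \citet{mazumder2012exact}: the key steps — that $(\widetilde{\Theta}^{-1})_{LR}=O_{q}$, that the across-block component of the fused subgradient can be taken to be zero because each summand $|\theta_{ij^{\prime}}-\theta_{i^{\prime}j}|$ is evaluated at zero and the fused penalty separates into in-block and across-block terms, and that the remaining $LR$ condition $S_{LR}+\lambda_{1}\Gamma_{LR}=O_{q}$ is feasible exactly when $|s_{ij^{\prime}}|\leq\lambda_{1}$ — are all handled correctly, including the coupling of the pair $(\theta_{ij^{\prime}},\theta_{i^{\prime}j})$. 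The paper instead argues by direct comparison: for an \emph{arbitrary} positive definite $\Theta$, let $\bar{\Theta}$ be its block-diagonal truncation; then $\mathcal{L}_{\lambda_{1},\lambda_{2}}(\bar{\Theta})\leq\mathcal{L}_{\lambda_{1},\lambda_{2}}(\Theta)$, shown term by term via Fischer's inequality for the log-determinant, an elementary bound combining the trace and $\ell_{1}$ terms (which is where the hypothesis on $\lambda_{1}$ enters), and the observation that truncation leaves $\norm{\Theta_{LL}-\Theta_{RR}}_{1}$ unchanged while killing $\norm{\Theta_{LR}-\Theta_{RL}}_{1}$. The paper's argument is more elementary and self-contained — no subdifferential calculus, no appeal to existence of the restricted minimizer, and it yields the stronger pointwise statement that zeroing the across block never increases the objective. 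Your argument requires slightly more machinery (strict convexity for uniqueness, well-posedness of the restricted problem, decomposability of the fused subdifferential), but it buys a cleaner explanation of \emph{why} the threshold is independent of $\lambda_{2}$ — the fused term contributes nothing to the $LR$ stationarity condition at a block-diagonal point — and it identifies the full solution with the restricted one, which is the form needed for block-screening rules in computation.
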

\begin{proof}
See Appendix~\ref{SEC:proof.block.diagonal.sol}.
\end{proof}

In the applications considered in this paper, the above results on the penalty parameters are used as follows. Firstly, we apply the \pdglasso\ procedure with $\lambda_{2}=0$ to a sequence of $m$ values of $\lambda_{1}$ equally spaced, on the log-scale, between $\frac{\lambdadiag}{m}$ and $\lambdadiag$, and then identify the optimal value of $\lambda_{1}$, among the $m$ considered. Next, with the latter value of $\lambda_{1}$, we apply again the \pdglasso\  to a sequence of $m$ values of $\lambda_{2}$ equally spaced, on the log-scale, between $\frac{\lambdasym}{m}$ and $\lambdasym$. The optimal solution is then chosen among the latter $m+1$ solutions identified, including that with $\lambda_{2}=0$. As a goodness of fit measure we use the extended Bayesian Information Criterion (eBIC) of \citet{foygel2010extended},
\begin{equation}\label{eBIC}
\mbox{eBIC}= -n\,l(\widehat{\Theta}^{\mle})+\log(n)\,d + 4\,d\,\gamma\,\log(p),
\end{equation}
where we either set $\gamma=0$, thereby reverting to the classical BIC, or $\gamma=0.5$, as suggested by \citet{foygel2010extended}, when a higher level of sparsity is required. Furthermore, $d$ is the number of parameters, computed by subtracting from the number of parameters of the saturated model, that is
$p(p+1)/2$, both the number of zero constraints and the number of equality constraints characterizing the relevant \PDRCON\ model. It is also worth recalling that $\widehat{\Theta}^{\mle}$ in (\ref{eBIC}) is the (unpenalized) Maximum Likelihood Estimate (MLE) of $\Theta$ \citep{foygel2010extended}, that we compute by means of the same algorithm implemented to solve the \pdglasso\ problem; see Section~\ref{SEC:appendix.algo} for details. This model selection procedure is implemented in the function \texttt{pdRCON.fit} of the \texttt{pdglasso} package.

The application considered in Section~\ref{SEC:application} concerns the identification of gene networks from gene expression data, whereas \citet{ranciati2021fused} considered the identification of brain networks from fMRI data. These are are both relevant areas of application where variables are measured on the same unit. We remark that this is an important issue because lasso methods are not invariant to scalar multiplication of the variables \citep[][p.~9]{hastie2015statistical}. Thus, when the variable units are not the same, it is common practice to standardized the data before applying the traditional graphical lasso \citep{carter2024partial}. However, as noticed by \citet[][Section~3.4]{hojsgaard2008graphical}, \RCON\ models are not invariant under rescaling, in the sense that standardization will not preserve the original structure of colour classes. This means that, although the application of the \pdglasso\ to the sample correlation matrix may still represent a useful way to choose parsimonious \GGM{s}, care needs to be taken in the interpretation of the resulting symmetries. \citet{roverato2024exploration} considered the implementation of greedy search procedures on the space of \PDRCON\ models, for which the standardization of the variables is not needed. However, they are computationally demanding and, thus, their use is limited to problems of small dimension. Furthermore, unlike graphical lasso methods, their application in the high-dimensional setting, with $p$ larger than $n$, is not straighforward.

\section{\PDRCON\ submodel classes}\label{SEC:submodel.classes}
When fitting a \PDRCON\ model to a set of data it can be useful to restrict the analysis to one of some relevant submodel classes. This may be motivated by a number of different reasons, including the interpretability of the selected model or the need to keep the model dimension low, for example when $n$ is small relative to $p$. The function \texttt{admm.pdglasso} of the \texttt{\pdglasso} package implements a more flexible version of the fused penalty (\ref{EQN:fused-penalty}), given by
\begin{align}\label{EQN:piecewise.fused-penalty}
\mathcal{Q}_{\lambda_{2}}(\Theta)=
\lambda_{2}^{(V)}\norm{\diag(\Theta_{LL})-\diag(\Theta_{RR})}_{1}
+
\lambda_{2}^{(I)}\norm{\Theta_{LL}^{\ast}-\Theta_{RR}^{\ast}}_{1}
+
\lambda_{2}^{(A)}
\norm{\Theta_{LR}-\Theta_{RL}}_{1}
\end{align}
where $\Theta_{LL}^{\ast}=\Theta_{LL}-\diag(\Theta_{LL})$ and, similarly for $\Theta_{RR}^{\ast}$; note that $\diag(\Theta)$ refers to a diagonal matrix having the same diagonal as $\Theta$. In this way, there is one regularization parameter associated with every type of parametric symmetry and, more specifically, $\lambda_{2}^{(V)}$ is associated with vertex symmetry, $\lambda_{2}^{(I)}$ with inside-block symmetry and $\lambda_{2}^{(A)}$ with across-block symmetry. For a given value of $\lambda_{2}>0$, each of the three $\lambda_{2}^{(\cdot)}$ parameters can take any of the values in the set $\{0, \lambda_{2},\texttt{Inf}\}$. For instance, the user can impose (i) no constraints involving vertex symmetries by setting $\lambda_2^{(V)}=0$, (ii) the amount of vertex symmetry regularisation implied by $\lambda^{(V)}_{2}=\lambda_{2}$ and (iii) full vertex symmetry by setting $\lambda_2^{(V)}=\texttt{Inf}$. The same can be done independently for  $\lambda_{2}^{(I)}$ and $\lambda_{2}^{(A)}$ thereby allowing the user to select a model within one of $|\{0, \lambda_{2},\texttt{Inf}\}|^{3}=9$  different \PDRCON\ submodel classes, and in the following we look at some of these submodels in more detail; see also Appendix~\ref{SEC:appendix.examples} for some examples with $|L|=|R|=3$.

The restrictions used to define an \RCON\ model are linear in the concentration matrix so that the log-likelihood function is concave \citep{hojsgaard2007inference,hojsgaard2008graphical,gehrmann2011lattices} and, thus, the \pdglasso\ in (\ref{EQN:pdglasso}) is a convex optimization problem. It is therefore computationally convenient to define symmetries in terms of equality of concentration values, although the interpretation of such constraints may not be straightforward.  \citet[][Section~8]{hojsgaard2008graphical} remarked that the comparison of concentration values is meaningful only
when variables are measured on comparable scales, and we note that this is a condition that is naturally satisfied in our framework because, firstly, for every $i\in L$ the corresponding variables $Y_{i}$  and  $Y_{i^{\prime}}$ measure a common feature and, secondly, \PDRCON\ models are defined in such a way that comparisons only involve comparable concentration values, associated with pairs of corresponding variables. Hence, for instance,
the equality $\theta_{ii}=\theta_{i^{\prime}i^{\prime}}$, for $i\in L$, is equivalent to $\sigma^{2}_{ii\mid V\setminus \{i\}}=\sigma^{2}_{i^{\prime}i^{\prime}\mid V\setminus \{i^{\prime}\}}$, that is, that the corresponding variables $Y_{i}$ and $Y_{i^{\prime}}$ have the same partial variance; recall that
$\sigma^{2}_{ii\mid V\setminus \{i\}}=\theta_{ii}^{-1}$ \citep[see, e.g.][Section~5.1.3]{lauritzen1996graphical}. Furthermore, if for instance, for $j\in L$ with $j\neq i$ it holds that both $\theta_{ii}=\theta_{i^{\prime}i^{\prime}}$ and $\theta_{jj}=\theta_{j^{\prime}j^{\prime}}$, then the constraint $\theta_{ij}=\theta_{i^{\prime}j^{\prime}}$ implies the equality of the two partial correlations $\rho_{ij\mid V\setminus \{i,j\}}=\rho_{i^{\prime}j^{\prime}\mid V\setminus \{i^{\prime},j^{\prime}\}}$;
see Appendix~\ref{SEC:full.symmetry} for details.
It is therefore of interest to consider the subfamily of \PDRCON\ models  satisfying full vertex symmetry, that is  $\theta_{ii}=\theta_{i^{\prime}i^{\prime}}$ for every $i\in L$, which are easy to interpret because every equality constraint between off-diagonal concentrations implies the equality of the corresponding partial correlations. In order to restrict the analysis within the family of \PDRCON\ models satisfying full vertex symmetry is sufficient to set $\lambda_{2}^{(V)}=\texttt{Inf}$, but we remark that the equality of partial variances is a strong assumption that should be properly verified.

It may be the case that the substantive research hypothesis underlying the analysis only concerns the comparison of the association structure of the first group with that of the second. More formally, the focus may be on the comparison of $\Theta_{LL}$ and $\Theta_{RR}$, which are adapted to $\G_{L}$ and $\G_{R}$, respectively, whereas the cross-group association $\Theta_{LR}$ is regarded as a nuisance parameter. In this case, across-block symmetries are of no interest and if the sample size is large it may make sense to set $\lambda_{2}^{(A)}=0$. On the other hand, for smaller sample sizes it is also possible to set $\lambda_{2}^{(A)}=\texttt{Inf}$ so as to impose full across-block symmetry with the aim to reduce model dimensionality.

Finally, the relevant research question may require the identification of a model that is fully symmetric, in the sense that there are no differences between the two groups. A fully symmetric model can be obtained by setting $\lambda_{2}^{(V)}=\lambda_{2}^{(I)}=\lambda_{2}^{(A)}=\texttt{Inf}$ and may be used, for instance, as a benchmark for the comparison with an arbitrarily selected model. Interestingly, fully symmetric models belong to the family of \RCON\ models satisfying permutation symmetry, and we refer to Appendix~\ref{SEC:full.symmetry} for a more detailed account on the properties of this submodel class.

\section{Implementation via ADMM algorithm}\label{SEC:appendix.algo}
%
Following \citet{danaher2014joint} and \citet{ranciati2021fused} we solve the \pdglasso\ optimization problem (\ref{EQN:pdglasso}) using an alternating directions method of multipliers (ADMM) algorithm; see \citet[][Section~6.5]{boyd2011distributed}. This is obtained by splitting the procedure into two nested optimization problems both solved by a specific ADMM algorithm, and where the inner ADMM algorithm is written in a form that makes use of the results for the fused lasso signal approximator given in \citet{friedman2007pathwise}. Our ADMM algorithm is implemented in the function \texttt{admm.pdglasso} of the \texttt{pdglasso} package. The latter extends the algorithm of  \citet{ranciati2021fused} to include across-block symmetries but, in fact, it has been designed to solve the more general problem involving the penalty (\ref{EQN:piecewise.fused-penalty}), and thus it makes it possible for the user to optimize (\ref{EQN:pdglasso}) either over the entire  family of \PDRCON\ models or over one of the subfamilies of \PDRCON\ models described in Section~\ref{SEC:submodel.classes}. Furthermore, the function \texttt{pdRCON.mle} of the same package exploits the ADMM algorithm for the computation of the MLEs. This is achieved by allowing different values of the penalty parameters for the different zero and equality constraints, and then setting the penalty term to a sufficiently large value for the constraints characterizing the model and to zero otherwise.

In the following, we provide a detailed description of the algorithm, and refer to \citet{boyd2011distributed} for its convergence properties.
The \pdglasso\ solution is obtained by looping, over iterations $l=1,2,\dots$, the following three steps \citep[see][equations (3.5) to (3.6)]{boyd2011distributed}:
\begin{enumerate}[label=(\arabic*)]
\item $\displaystyle \Theta^{l+1}:=\argmin_{\Theta}\left( -\log\det(\Theta)+\tr(S\Theta)+\dfrac{\rho_1}{2}\norm{\Theta-Z^l+U^l}^2_{\text{F}} \right);$
\item $\displaystyle Z^{l+1}:=\underset{Z}{\argmin}\biggl(\mathcal{P}_{\lambda_{1}}(Z) +  \mathcal{Q}_{\lambda_{2}}(Z)  +\dfrac{\rho_1}{2}\norm{\Theta^{l+1}-Z+U^l}^2_{\text{F}}\biggl);$
\item $\displaystyle U^{l+1}:=U^l+\Theta^{l+1}-Z^{l+1}$,
\end{enumerate}
where $\norm{\cdot}_{\text{F}}$ denotes the Frobenius norm, $\rho_1>0$ is the step size and $U$ and $Z$ are $p \times p$ matrices initialized with all entries equal to zero. The solution to step (1) can be obtained in analytic form as detailed in \citet[][Section~3.1.1]{boyd2011distributed}; see also \citet{ranciati2021fused}.

More specific to our implementation is the solution of step (2). Consider a matrix $Q$ whose columns and rows are indexed by $V=L\cup R$, and let $\vech(\cdot)$ and $\vd(\cdot)$ denote the half-vectorization and the diagonal extraction operator, respectively. Hence, we define the vector $\myvec(Q)$ as,
\begin{align*}
\myvec(Q)^{\top}=
\left[\arraycolsep=0.2pt
\begin{array}{ccccccc}
\vd(Q_{LL})^{\top} &
\vd(Q_{RR})^{\top} &
\vech(Q_{LL})^{\top} &
\vech(Q_{RR})^{\top} &
\vech(Q_{LR})^{\top} &
\vech(Q_{RL})^{\top} &
\vd(Q_{LR})^{\top}
\end{array}
\right],
\end{align*}
and then we set $z^{l} = \myvec(Z^{l})$ and $b^{l} = \myvec(\Theta^{l}) + \myvec(U^{l})$. All equality constraints are encoded in a matrix $F$, made up of the following three row blocks: $[I_{q},\, -I_{q},\, O_{q,4s+q}]$, $[O_{s,2q},\, I_{s},\,  -I_{s},\, O_{s,2s+q}]$ and $[O_{s,2q+2s},\, I_{s},\, -I_{s},\, O_{qq}]$,
where $I$ and $O$ are the identity and zero matrix, respectively, and $s=q(q-1)/2$. Thus, step (2) can be stated as the following
self-standing optimization problem,
\begin{equation*}
\argmin_{z} \biggl(\dfrac{1}{2}\norm{z-b}^2_{2}+\lambda^{\prime}_{1}\!\norm{z}_1+\norm{F(\boldsymbol{\lambda}^{\prime}_{2} \circ z)}_1\biggl),
\end{equation*}
where $\circ$ denotes the element-wise product, $\lambda^{\prime}_{1}=\lambda_1/\rho_1,$ ${\boldsymbol{\lambda}^{\prime}}^{\top}_{2}=\left[  \dfrac{\lambda_{2}^{(V)}}{\rho_1} {\boldsymbol{1}_{2q}}^{\top}  \,\,\,\,\,  \dfrac{\lambda_{2}^{(I)}}{\rho_1}  {\boldsymbol{1}_{2s}}^{\top}\,\,\,\,\, \dfrac{\lambda_{2}^{(A)}}{\rho_1}   {\boldsymbol{1}_{2s+q}}^{\top}  \right],$ and $\boldsymbol{1}_q$ the unit vector of length $q$.

Due to \citet[Lemma A.1]{friedman2007pathwise}, we can firstly solve
\begin{equation*}
\argmin_{z} \biggl(\dfrac{1}{2}\norm{z-b}^2_{2}+\norm{F(\boldsymbol{\lambda}^{\prime}_{2} \circ z)}_1\biggl),
\end{equation*}
with $\lambda^{\prime}_{1}=0$, and then obtain the solution for the case $\lambda^{\prime}_{1}>0$ via a soft-thresholding operation. This is a generalized lasso problem \citep{tibshirani2011solution}, which we solve with the following inner ADMM procedure \citep[Section~6.4.1]{boyd2011distributed},
\begin{enumerate}[label=(\roman*)]
\item $z^{m+1}:= \bigl(I+\rho_2F^{\top}\!F\bigl)^{-1}\left\{b+\rho_2F^{\top}\!(v^m-t^m)\right\}$;
\item $v^{m+1}:=\mathcal{S}_{\boldsymbol{\lambda}^{\prime}_2/\rho_2}(Fz^{m+1}+t^m)$;
\item $t^{m+1}:=t^m+Fz^{m+1}-v^{m+1}$.
\end{enumerate}
The vectors $v$ and $t$ are initialized with zero entries, the scalar $\rho_2 >0$ refers to the step size of the inner ADMM and $\mathcal{S}_{\boldsymbol{\lambda}^{\prime}_2/\rho_2}(\cdot)$ represents the soft-thresholding operator, such that any input less than $\boldsymbol{\lambda}^{\prime}_2/\rho_2$ (in absolute value) is set to zero, otherwise it gets shrunk by that threshold.
The stopping rule of the algorithm is based on the primal and dual residual \citep[Section~3.3.1]{boyd2011distributed}, computed at each iteration, which are compared to a given numerical precision threshold.

Finally, we remark that, in the computer implementation of the above ADMM algorithm, efficiency has been achieved by making the
step sizes $\rho_{1}$ and $\rho_{2}$ adaptive and, furthermore, both by properly reducing the dimension of the matrix $F$ according to the submodel class of interest, and by encoding the results of the products involving the matrix $F$ in the form of less expensive vectorized computations, explicitly based on the (sparse) structure of such matrix. In a call to the function \texttt{pdRCON.fit} on a grid of 20 values for each penalty parameter, the ADMM algorithm is run 80 times, that is 40 times on a range of different penalty values and 40 times for the computation the MLEs for the eBIC. On a recent hardware (CPU 1.8GHz Intel Core i7 quad-core, memory 16 GB 2133 MHz LPDDR3), the individual cost for an ADMM execution is approximately 1 second. These computations refer to execution times for the simulation study described in the Section \ref{SEC:simulations}.

\section{Simulation study}\label{SEC:simulations}
\PDRCON\ models are \GGM{s} with additional equality constraints, which we call parametric symmetries. It follows that a \GGM\ is a \PDRCON\ model with no parametric symmetry, and it is therefore of interest to investigate the behaviour of the \pdglasso\ method when the data are generated from a \GGM\ and, similarly, the behaviour of the glasso method when parametric symmetries are indeed present.

In the following, we compare the behaviour of glasso and \pdglasso\ by means of simulated data, and consider three different scenarios: (i) \GGM{s} with no parametric symmetry, (ii) \PDRCON\ models where 50\% of the parameters are involved in a parametric symmetry and the remaining 50\% are unconstrained and, (iii) fully symmetric \PDRCON\ models.
For each scenario we set $p=100$ and randomly generated 10 models with graph density equal to $0.20$, where this density is computed as the number of present edges over the number of edges of the complete graph. A Gaussian distribution, identified by its covariance matrix, was obtained for every model and from each of the resulting 30 distributions we randomly generated $7$ datasets of size $n=\{100, 150, 200, 300, 500, 1000, 1500\}$. To each of the 210 resulting datasets, we applied the  procedure described in Section~\ref{SEC:pdglasso.solutions} with grid length $m=20$ and $\gamma=0$. All the computations were carried out by using the \textsf{R} package \texttt{pdglasso}, and we now describe the procedure used to randomly generate the covariance matrices characterizing the Gaussian distributions used in the simulations. The procedure implemented in the function  \texttt{GGM.simulate} to obtain a covariance matrix $\Sigma^{(\ast)}$ relative to a \GGM\ is as follows. Firstly, a $p\times p$ positive definite matrix $S^{\ast}$ is randomly generated from a Wishart distribution with matrix parameter equal to the identity matrix. Next, $(S^{\ast})^{-1}$ is computed and an undirected graph $\G$, with the required density, is obtained in such a way that the edges of $\G$ correspond to the largest absolute values of the off-diagonal entries of $(S^{\ast})^{-1}$. Finally, a positive definite matrix $\Sigma^{(\ast)}$, that identifies a distribution from the \GGM\ represented by $\G$, is obtained by applying to $S^{\ast}$ the procedure for the computation of the MLEs. The same idea is exploited by the function \texttt{pdRCON.simulate} to add to $\Sigma^{(\ast)}$ the required amount of symmetries so as to obtain a covariance matrix from a \PDRCON\ model. For information about the relevant material to reproduce the simulations see the Section ``Code and data availability'' of this paper.

For every scenario and sample size, the average behaviour across the 10 selected models of  \pdglasso\ was compared with that of  glasso. Because glasso is not designed to identify parametric symmetries, we compared the two methods with respect to their performance in learning the graph structure and the inverse covariance matrix $\Theta$.

The performance of the procedures in structure learning was measured through the Positive Predicted Value (PPV) and the True Positive Rate (TPR). These are given in Figure~\ref{FIG:simulations.01} where also a measure of parsimony is provided, by the number of parameters of the selected models.
The panels composing the first column of Figure~\ref{FIG:simulations.01} show that, for the no symmetry case where the true model is a GGM, the \pdglasso\ method selected less parsimonious models, thereby resulting in a worse performance, compared to glasso, with respect to PPV but an improvement in terms of TPR; these differences between the two methods are more noticeable for larger sample sizes. Plots in the second column show that in the 50\% symmetry case the two methods performed similarly. For the full symmetry case, third column, \pdglasso\ selected more parsimonious models, with a better performance in terms of PPV and producing TPR values comparable to those of glasso.

\begin{figure}[tbp]
\centering
\includegraphics[scale=0.38]{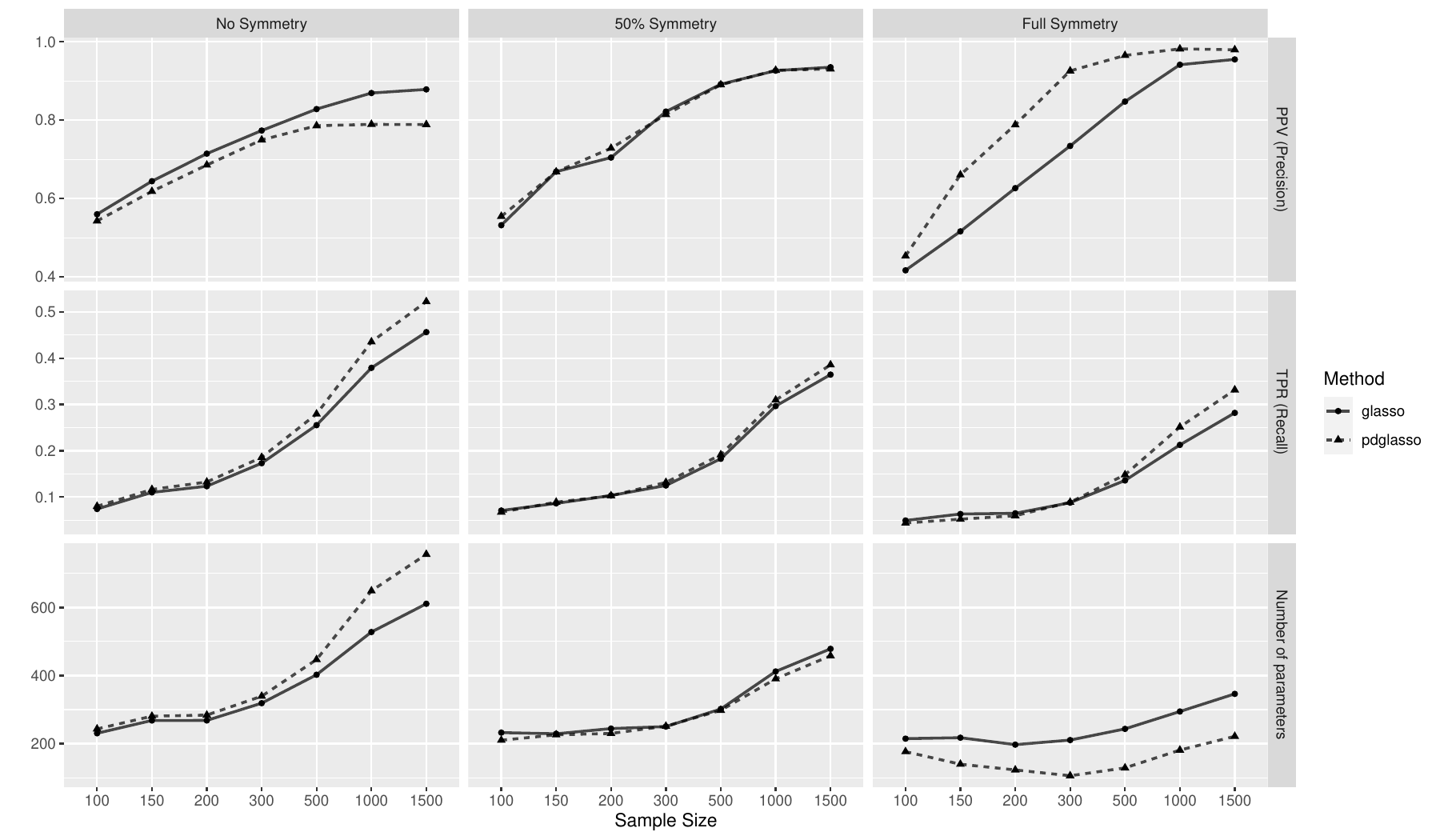}
\caption{\small Comparison of the \pdglasso\ and the glasso methods for the three scenarios: ``no symmetry'', ``50\% symmetry'' and ``full symmetry''. Panels in the first row give the Positive Predicted Value (PPV) also called precision, those in the second row the True Positive Rate (TPR) also called recall, whereas the third row gives the number of parameters. Every value is the average of the corresponding quantities computed on the 10 selected models for every scenario and sample size.}
\label{FIG:simulations.01}
\end{figure}
\begin{figure}[tbp]
\centering
\includegraphics[scale=0.38]{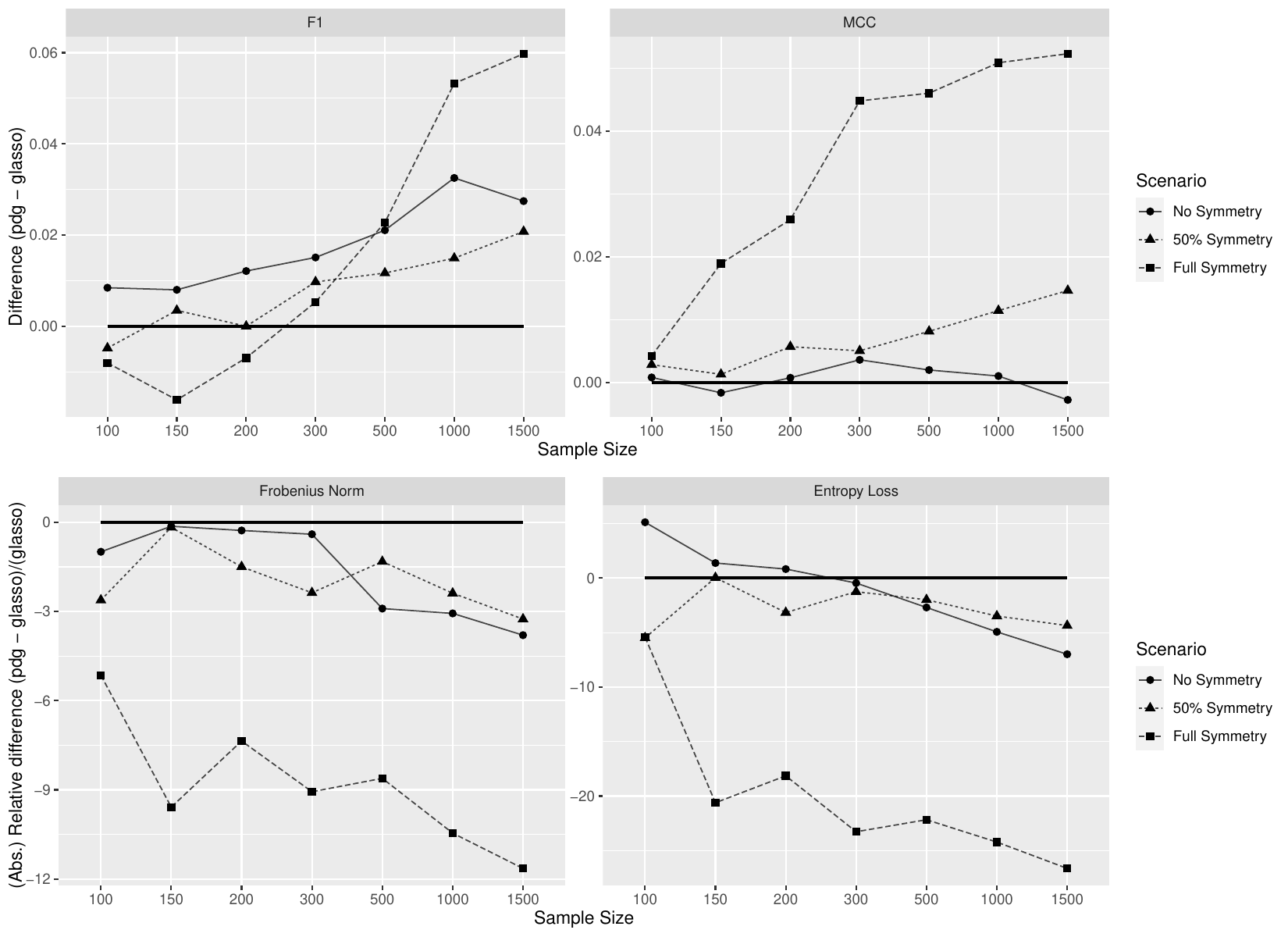}
\caption{\small Comparison of the \pdglasso\ and the glasso methods for the three scenarios: ``no symmetry'', ``50\% symmetry'' and ``full symmetry''. Top panels: differences of the F1 score values and of the Matthews correlation coefficient (MCC) values, with positive value denoting a better performance of \pdglasso. Bottom panels: relative differences of the Frobenius norm loss values for the inverse covariance and of the entropy loss values, with negative value denoting a better performance of \pdglasso. Every value is the average of the corresponding quantities computed on the 10 selected models for every scenario and sample size.}
\label{FIG:simulations.02}
\end{figure}

As overall measures of performance in structure learning we considered the harmonic mean of PPV and TPR, called the F1 score, and the Matthews correlation coefficient (MCC). Furthermore, we assessed the performance with respect to the estimate of the inverse covariance by means of the Frobenius norm loss and of the entropy loss \citep{dey1985estimation}. 
More specifically the top panels of Figure~\ref{FIG:simulations.02} compare the two method by means of the differences of the F1 scores and the MCC values, respectively, with positive values in favour of \pdglasso\, whereas the bottom panels give the relative differences of the Frobenius norm loss and of the entropy loss, with negative values in favor of \pdglasso. 
Figure~\ref{FIG:simulations.02} shows that, although there are some cases where glasso performs moderately better than \pdglasso\, especially for small sample sizes and in the no symmetry case, there is a clear evidence that, as expected, the \pdglasso\ method provided better results in the models presenting parametric symmetry. Additionally, in the case where the data
come from a GGM the \pdglasso\ method provided results comparable with those of glasso.

\section{Analysis of breast cancer gene expression data}\label{SEC:application}
We illustrate the use of our proposed method in a gene expression paired data problem concerning breast cancer. The samples refer to $n=114$ individuals with both tumor and healthy adjacent tissue information, hence the paired data nature of the problem. A curated set of $q=89$ genes of the Hedgehog Pathway, known for their involvement in breast cancer \citep{song2014pathway,kubo2004hedgehog}, is extracted from MSigDB Collections \citep{subramanian2005gene}. For each individual, we consider gene-level transcription estimates, that is $\text{log}_2(Y+1)$ transformed normalized counts, across the same set of selected genes in both tissues ($p=89\times2=178$), which lead to the $114 \times 178$ final data matrix for the analysis.

We focus on two pdRCON submodel classes. Firstly, we consider a \PDRCON\ model (labelled as \texttt{fV}) where $\lambda^{(V)}_{2}=\texttt{Inf}$ and  $\lambda^{(I)}_{2} = \lambda^{(A)}_{2}=\lambda_{2}$, so as to force full vertex symmetry, and penalize with magnitude $\lambda_{2}$ the remaining blocks. The second model considered  (labelled as \texttt{fVIA}) is a fully symmetric model, which is obtained by setting $\lambda^{(V)}_{2}=\lambda^{(I)}_{2} = \lambda^{(A)}_{2}=\texttt{Inf}$. More details on these submodel classes can be found in  Section~\ref{SEC:submodel.classes} and Appendix~\ref{SEC:full.symmetry}. For the selection of the optimal penalty values we apply the procedure described in Section~\ref{SEC:pdglasso.solutions}, with grid length $m=20$, and where the eBIC is applied with $\gamma=0.5$, because domain experts are interest in recovering a sparse graph with the most relevant connections. The selected models, denoted as \texttt{mod$_{\text{fV}}$} and \texttt{mod$_{\text{fVIA}}$}, respectively, are
visually depicted in  Figures~\ref{FIG:appendix.plots.fV} and \ref{FIG:appendix.plots.fVIA} of Appendix~\ref{SEC:appendix.plots}. They are very sparse and, for this reason, in the following we will only focus on the symmetries, either structural and parametric, that involve edges which are present in the graphs. Accordingly, to ease the reading, we will simply talk of symmetries without recalling that we are only considering present edges.

\begin{table}
\caption{\label{TAB:G.summary}Summary statistics on the structure of the models identified by the application of \pdglasso: \texttt{mod$_{\text{fVIA}}$} (last column); \texttt{mod$_{\text{fV}}$} and its three submodels considered (second to fifth column); \texttt{sub1$_{\text{fV}}$}: across-block edges set to zero; \texttt{sub2$_{\text{fV}}$}: inside-block structural \textbf{a}symmetries removed; \texttt{sub3$_{\text{fV}}$}: inside-block structural symmetries converted to parametric symmetries. All models have full vertex symmetry.}
\center\begin{tabular}{l|cccc|c}
\hline
 & \multicolumn{4}{|c|}{Reference model and submodels}  & \\
Summary & \texttt{mod$_{\text{fV}}$} & \texttt{sub1$_{\text{fV}}$}  & \texttt{sub2$_{\text{fV}}$} &  \texttt{sub3$_{\text{fV}}$} & \texttt{fVIA} \\
\hline
Total edges & 117 & 110 & 49 & 117 & 300\\
Graph density & 0.74 \% & 0.70 \% & 0.31 \% & 0.74 \% &1.90 \% \\
\hline
\multicolumn{6}{c}{\emph{inside block}} \\
\hline
Total edges & 110 & 110 & 42 & 110 & 286 \\
Structural (non-zero) symmetric edges & 18 & 18 & 18 & 0 &0 \\
Parametric (non-zero) symmetric edges & 24 & 24 & 24 & 42 & 286\\
\hline
\multicolumn{6}{c}{\emph{across block}} \\
\hline
Total edges & 7 & 0 & 7 & 7 & 14\\
Structural (non-zero) symmetric edges & 0 & 0 & 0 & 0 & 0\\
Parametric (non-zero) symmetric edges & 0 & 0 & 0 & 0 & 10\\
\hline
\end{tabular}
\end{table}

An overview of the structure of the models involved in this analysis is provided in Table~\ref{TAB:G.summary}. The graph associated to model \texttt{mod$_{\text{fV}}$} has an overall density lower than 1\%, with inside-block density larger than the across-block one. In particular, the graph has only 7 edges connecting genes across the two different groups (tumor vs healthy), whereas the two group specific graphs are similar in terms of density, with 56 edges between variables in the tumor group and 54 in the healthy group. The most connected gene, both inside and across blocks, is the one named LRP2. More specifically, in the tumor group 16 out of the 56 edges ($28.6\%$) connect LRP2 to other inside-block genes, whereas in the healthy group 37 out of the 54 edges ($68.5\%$) involve LRP2. Furthermore, most of the across-block edges, i.e. 5 out of the 7 ($71.4\%$), involve LRP2 and,  finally, 12 of the 24 coloured edge involve LRP2, that is $50\%$ of the identified parametric symmetries concern this gene.

We turn now to the fully symmetric model \texttt{mod$_{\text{fVIA}}$} and note that, although in this case the identified graph has a larger number of edges, with a density that is almost three times greater than that of the graph associated with \texttt{mod$_{\text{fV}}$}, the number of parameters of \texttt{mod$_{\text{fVIA}}$} is only $1.3$ times larger than that of \texttt{mod$_{\text{fV}}$}, which shows how the identification of parametric symmetries can provide an effective way to increase  parsimony in the selection of a \GGM.

\begin{table}
\caption{\label{TAB:LRT} Summary of Likelihood Ratio Tests (LRTs) performed on each submodel against the reference \texttt{mod$_{\text{fV}}$}.}
\center\begin{tabular}{l|cccc}
\hline
$H_0:$ submodel is preferable & \texttt{mod$_{\text{fV}}$} & \texttt{sub1$_{\text{fV}}$}  & \texttt{sub2$_{\text{fV}}$} &  \texttt{sub3$_{\text{fV}}$} \\ 
\hline
Number of parameters & 194 & 187 & 126 & 185  \\ 
Deviance & 12242.22 & 12285.79  & 14218.92 & 12393.96 \\
LRT value & - & 43.56 & 1976.69 & 151.73 \\
Degrees of freedom (df) & - & 7  & 68 & 9 \\
Critical $\chi_{df,\alpha}$ quantile at $\alpha=0.95$ & - & 14.07 & 88.25 & 16.92 \\
\hline
\end{tabular}
\end{table}

We also compare models \texttt{mod$_{\text{fV}}$} with three of its submodels: (i) \texttt{sub1$_{\text{fV}}$}, where across-block edges are removed; (ii) \texttt{sub2$_{\text{fV}}$}, where inside-block structural \textbf{a}symmetries are removed; (iii) \texttt{sub3$_{\text{fV}}$}, where all inside-block structural symmetries are turned into parametric symmetries. The three submodels are not nested within one another but they all are with respect to \texttt{mod$_{\text{fV}}$}.

We first consider model \texttt{sub1$_{\text{fV}}$}. Unlike the case of independent samples, in paired data problems a key issue is represented by the cross-graph structure that encodes the across-group dependence. It is interesting to note that there are potentially $q^{2}=7921$ across-graph edges, but the selected model \texttt{mod$_{\text{fV}}$} shows a very sparse across-graph association structure with as few as 7 edges, most of which involving the gene LRP2. It is therefore of interest to quantify the significance of the across-graph structure by comparing the model \texttt{mod$_{\text{fV}}$} with its submodel \texttt{sub1$_{\text{fV}}$} obtained by assuming that the group specific subgraphs are disconnected, that is $\Theta_{LR}=O_{q}$. This comparison is carried out by means of the  Likelihood Ratio Test (LRT) at a significance level of $\alpha=0.05$, given in Table~\ref{TAB:LRT}, that shows  that the model  \texttt{sub1$_{\text{fV}}$} seems not to provide an adequate fit to the data, and thus we can conclude that the across-groups structure cannot be ignored, and should thus be retained.

In the comparison of the two groups, differences are represented by \textbf{a}symmetries, and this motivates the comparison of the selected model
\texttt{mod$_{\text{fV}}$} with its submodels \texttt{sub2$_{\text{fV}}$}  and  \texttt{sub3$_{\text{fV}}$}. As shown but juxtaposing the second and fourth column of Table~\ref{TAB:G.summary}, among the 110 inside-block edges of the reference model 68 are \textbf{a}symmetric edges and thus not involved in any structural or parametric symmetry and model \texttt{sub2$_{\text{fV}}$} is obtained by removing such 68 edges. Hence, comparing \texttt{mod$_{\text{fV}}$} with \texttt{sub2$_{\text{fV}}$} amounts to test the presence of structural differences between the group specific subgraphs. Finally, comparing \texttt{mod$_{\text{fV}}$} with \texttt{sub3$_{\text{fV}}$} amounts to test if the model can be further simplified by assuming that every structural symmetry is also parametric. As shown in Table~\ref{TAB:LRT} the empirical evidence is in favor of the hypotheses that structural asymmetries are present and that not all structural symmetries are also parametric.

\section{Conclusions}\label{SEC:conclusions}
We have considered the problem of joint learning of \GGM{s} for paired data, in an approach that implements a fused graphical lasso penalty to identify a model within a suited family of coloured graphical models. We have addressed a number of issues with the aim of providing the results and tools required for an immediate and knowledgeable application of the method. The latter include an ADMM algorithm for the optimization of the penalized likelihood, some results required when computing a path of lasso solutions, the description of the features of some model subfamilies of specific interest and, finally, an \textsf{R} package where all the methods have been implemented.

We deem that one appealing feature of our approach is the interpretation of the selected model. When the number of variables is large, it is not straightforward to visualize and interpret a \GGM, and this task is even more challenging when the analysis involves two or more, possibly dependent, networks. The application of \PDRCON\ models is restricted to the case where two groups are considered, but it provides a transparent representation of the across group association. Furthermore, as shown in the application of Section~\ref{SEC:application}, one can meaningfully summarize the relationship between groups by focusing on the amount of both structural and parametric symmetry/asymmetry. The scope of the analysis may also justify the restriction to specific submodels such as the model with full vertex symmetry of the fully symmetric model.

\section*{Code and data availability}
All implemented scripts and functions used throughout the paper (and more) are available at the following GitHub repository: \url{https://github.com/savranciati/pdglasso}. For reproducibility reasons, both original data (downloaded from \url{https://xenabrowser.net/}) and sample covariance matrices are available as part of the \textsf{R} package \texttt{pdglasso}, together with their metadata information. Scripts and datasets pertaining to the simulation study are available online as Supplementary Material.

\section*{Acknowledgments}
We would like to thank two anonymous referees for their valuable comments, and Alberto Tonolo, Ugo Ala and Sara Ferrini for useful discussion and the support provided in pre-processing the cancer genomics data.

\section*{Funding}
Financial support was provided by the European Union - Next Generation EU, Mission 4 Component 2 - CUP~C53D23002580006
(MUR-PRIN grant~2022SMNNKY)

\section*{Disclosure statement}
The authors report there are no competing interests to declare.

\bibliographystyle{chicago}
\bibliography{biblio-pdglasso}


\appendix
\newpage
\section*{Appendices}

\section{Fully symmetric \PDRCON\ models}\label{SEC:full.symmetry}
\newcommand{\A}{A.}
\renewcommand{\theequation}{\A\arabic{equation}}
\renewcommand{\thetable}{\A\arabic{table}}
\renewcommand{\thefigure}{\A\arabic{figure}}
\setcounter{figure}{0}
\setcounter{equation}{0}

One of the  main motivations for the analysis of paired data is the need to identify similarities, i.e. symmetries, and differences between the two groups. Thus, in this framework, a central role is played by the subclass of \PDRCON\ models that show no differences between the two groups. It is of theoretical interest to understand the properties of this model class and, furthermore, in applications it may be useful to estimate a model within this class because it could be regarded as a benchmark for the comparison with  an arbitrarily selected model, and the quantification of the amount of asymmetry it shows. We call these models \emph{fully symmetric} because, as shown below, their parametric symmetries are not restricted to the equalities of concentration coefficients but extend to  partial correlation coefficients, variances, covariances and correlation coefficients.  Theorem~\ref{THM:hat.fully.sym} provides a result concerning the value of $\lambda_{2}$ such that the solution to the $\pdglasso$ is a fully symmetric model, whereas an instance of a fully symmetric \PDRCON\ model is given in Example~\ref{EXA:app.4} of Appendix~\ref{SEC:appendix.examples}.

As far as the interpretation of the model is of concern, a useful property of fully symmetric \PDRCON\ models is that every inside and across-block parametric equality constraint, for instance $\theta_{ij}=\theta_{i^{\prime}j^{\prime}}$,
also implies that the corresponding partial correlations have the same values, i.e. $\rho_{ij\mid V\setminus\{i,j\}}=\rho_{i^{\prime}j^{\prime}\mid V\setminus\{i^{\prime},j^{\prime}\}}$. This follows immediately from the fact that the partial correlation between $Y_{i}$ and $Y_{j}$ given the remaining variables can be computed from concentrations as $\rho_{ij\mid V\setminus\{i,j\}}=-\theta_{ij}/\sqrt{\theta_{ii}\theta_{jj}}$ \citep[see, e.g.][Section~5.1.3]{lauritzen1996graphical} and in a fully symmetric model it also holds that both $\theta_{ii}=\theta_{i^{\prime}i^{\prime}}$ and $\theta_{jj}=\theta_{j^{\prime}j^{\prime}}$.

An equivalent way to define fully symmetric \PDRCON\ models is by requiring that the random vector $(Y_{L}, Y_{R})^{\top}$ has the same distribution as that of $(Y_{R}, Y_{L})^{\top}$; that is $Y_{L}$ and $Y_{R}$ are exchangeable. This shows that fully symmetric \PDRCON\ models belong to the family of coloured \GGM{s} which satisfy permutation symmetry, as defined in  \citet[][Section~5]{hojsgaard2008graphical}; see also \citet{gehrmann2011lattices} and \citet{graczyk2022model}. More specifically, if we denote by $J$ the permutation matrix that exchanges $Y_{L}$ and $Y_{R}$, i.e. $J\times (Y_{L}, Y_{R})^{\top}=(Y_{R}, Y_{L})^{\top}$, then we have,
\begin{align}\label{EQN:definition-J}
J =
\left(
\begin{array}{cc}
O_{q} & I_{q}\\
I_{q} & O_{q}\\
\end{array}
\right)
\quad\mbox{and thus}\quad
J \Theta J =
\left(
\begin{array}{cc}
\Theta_{RR} & \Theta_{RL}\\
\Theta_{LR} & \Theta_{LL}\\
\end{array}
\right),
\end{align}
where $I_{q}$ and  $O_{q}$ are the $q\times q$ identity  and zero matrices, respectively. Recall that permutation matrices are orthogonal and, furthermore, $J$ is symmetric so that  $J=J^{\top}=J^{-1}$. We can thus say that a \PDRCON\ model is fully symmetric if and only if $J\Theta J=\Theta$, in words, if and only if \emph{$\Theta$ is invariant under the action of $J$}. It is straightforward to see that the latter equality is equivalent to $(J\Theta J)^{-1}=\Theta^{-1}$ and thus to $J\Sigma J=\Sigma$, and this shows that in a fully symmetric model $\Sigma$ is also invariant under the action of $J$. This implies that parametric symmetry holds also for to the entries of the variance matrix, that is $\Sigma_{LL}=\Sigma_{RR}$ and $\Sigma_{LR}=\Sigma_{RL}$ and it straightforward to see that also  the correlation matrix $P$ satisfies the same property, i.e. both  $P_{LL}=P_{RR}$ and $P_{LR}=P_{RL}$.

Finally, we notice that for the family of fully symmetric models, the MLE of $\Theta$ can be obtained by instead considering the matrix $\bar{S}=(S+JSJ)/2$ and finding the MLE of the corresponding \GGM{} without symmetry restrictions \citep[][Section~5.2]{hojsgaard2008graphical}. This MLE can be calculated explicitly when $\G$ is decomposable, or alternatively a standard algorithm, such as the iterative proportional scaling, can be used \citep[][page~146]{lauritzen1996graphical}.

\section{Proofs}\label{SEC:proofs}

\subsection{Proof of Theorem~\ref{THM:hat.fully.sym}}\label{SEC:appendix.proof.main.theorem}
\newcommand{\B}{B.}
\renewcommand{\theequation}{\B\arabic{equation}}
\renewcommand{\thetable}{\B\arabic{table}}
\renewcommand{\thefigure}{\B\arabic{figure}}
\setcounter{equation}{0}

For a positive definite matrix $\Theta$ we consider $J$ in (\ref{EQN:definition-J}) and set the matrix $\bar{\Theta}=\frac{1}{2}(\Theta+J\Theta J)$. Note that
$J^{-1}=J$ so that $J\bar{\Theta}J=\frac{1}{2}(J\Theta J+JJ\Theta JJ)=\frac{1}{2}(J\Theta J+\Theta)=\bar{\Theta}$, that is,
$\bar{\Theta}$ is fully symmetric. It is also worth remarking that $\bar{\Theta}$ is positive definite by construction.

We now show that if $\lambda_{2}$ is chosen to satisfy
the inequality of Theorem~\ref{THM:hat.fully.sym}  then for  any positive definite matrix $\Theta$ it holds that $\mathcal{L}_{\lambda_{1},\lambda_{2}}(\bar{\Theta})\leq \mathcal{L}_{\lambda_{1},\lambda_{2}}(\Theta)$ so that the matrix which minimises $\mathcal{L}_{\lambda_{1},\lambda_{2}}(\cdot)$ in (\ref{EQN:pdglasso}) must be fully symmetric. Concretely, we show that
$\mathcal{L}_{\lambda_{1},\lambda_{2}}(\bar{\Theta})- \mathcal{L}_{\lambda_{1},\lambda_{2}}(\Theta)\leq 0$ and, to this aim, we analyse, in turn, three components of this difference.

We first show that $\{-\log\det(\bar{\Theta})\}-\{-\log\det(\Theta)\}\leq 0$,  which is equivalent to $\log\det(\Theta)\leq \log\det(\bar{\Theta})$ and therefore to $\det(\Theta)\leq \det(\bar{\Theta})$. Because both $\Theta$ and $J\Theta J$ are positive definite the following inequality holds
\citep[see, e.g.,][p.~55 eqn (14)]{lutkepohl1996handbook}
\begin{align*}
\det(\bar{\Theta})^{1/p}
&= \det\left(\frac{1}{2}\Theta+\frac{1}{2}J\Theta J\right)^{1/p}\geq \det\left(\frac{1}{2}\Theta\right)^{1/p} + \det\left(\frac{1}{2}J\Theta J\right)^{1/p}\\
&= 2 \det\left(\frac{1}{2}\Theta\right)^{1/p}= \det(\Theta)^{1/p}
\end{align*}
where we have used the facts that $\det(\Theta)=\det(J\Theta J)$, because $J$ is a permutation matrix, and that $\det\left(\frac{1}{2}\Theta\right)=\frac{1}{2^p}\det(\Theta)$. We have thus shown that $\det(\bar{\Theta})^{1/p}\geq \det(\Theta)^{1/p}$ and this immediately implies that $\det(\bar{\Theta})\geq \det(\Theta)$ as required.

We now show that $+\lambda_1\!\norm{\bar{\Theta}}_1-\lambda_1\!\norm{\Theta}_1\leq 0$, i.e, that $\norm{\bar{\Theta}}_1-\norm{\Theta}_1\leq 0$. We first notice that $\norm{\Theta}_1$ can be computed on the block components
(\ref{EQN:Theta.blocks}) of $\Theta$ as follows,
\begin{align*}
\norm{\Theta}_{1}
&=
 \norm{\Theta_{LL}}_{1}
+\norm{\Theta_{RR}}_{1}
+\norm{\Theta_{LR}}_{1}
+\norm{\Theta_{RL}}_{1}.
\end{align*}
Similarly,
\begin{align*}
\norm{\bar{\Theta}}_{1}
&=\norm{\bar{\Theta}_{LL}}_{1}
+\norm{\bar{\Theta}_{RR}}_{1}
+\norm{\bar{\Theta}_{LR}}_{1}
+\norm{\bar{\Theta}_{RL}}_{1}\\
&=\frac{1}{2}(
 \norm{\Theta_{LL}+\Theta_{RR}}_{1}
+\norm{\Theta_{RR}+\Theta_{LL}}_{1}
+\norm{\Theta_{LR}+\Theta_{RL}}_{1}
+\norm{\Theta_{RL}+\Theta_{LR}}_{1}
)\\
&= \norm{\Theta_{LL}+\Theta_{RR}}_{1}
+\norm{\Theta_{LR}+\Theta_{RL}}_{1}.
\end{align*}
Hence,
\begin{align*}
\norm{\bar{\Theta}}_{1}-\norm{\Theta}_{1}
&= \norm{\Theta_{LL}+\Theta_{RR}}_{1} - (\norm{\Theta_{LL}}_{1}+\norm{\Theta_{RR}}_{1})\\
&+ \norm{\Theta_{LR}+\Theta_{RL}}_{1} - (\norm{\Theta_{LR}}_{1}+\norm{\Theta_{RL}}_{1}),
\end{align*}
and it follows that $\norm{\bar{\Theta}}_{1}-\norm{\Theta}_{1}\leq 0$ because for any pair of real numbers $a$ and $b$ it holds that $|a+b|\leq |a|+|b|$.

In order to complete the proof we have now to show that
\begin{align}\label{EQN:proof3.third.part}
\tr(S\bar{\Theta})
-\tr(S\Theta)-\lambda_2
\left(
\norm{\Theta_{LL}-\Theta_{RR}}_1
+\norm{\Theta_{LR}-\Theta_{RL}}_1
\right)\leq 0.
\end{align}
Note that the term $\lambda_2\left(\norm{\bar{\Theta}_{LL}-\bar{\Theta}_{RR}}_1+\norm{\bar{\Theta}_{LR}-\bar{\Theta}_{RL}}_1\right)$ does not appear in (\ref{EQN:proof3.third.part}) because it is equal to zero by construction. We first consider the difference
\begin{align}
\nonumber
\tr(S\bar{\Theta})-\tr(S\Theta)
\nonumber&=\tr(S\bar{\Theta}-S\Theta)=\tr\{S(\bar{\Theta}-\Theta)\}=\tr\{S(\frac{1}{2}\Theta + \frac{1}{2}J\Theta J- \Theta)\}\\
&=\tr\{\frac{1}{2}S(J\Theta J - \Theta)\}
\label{EQN:proof3.trace.matrix}
=\frac{1}{2}\sum_{i=1}^{p}\sum_{j=1}^{p} s_{ij}(J\Theta J - \Theta)_{ij},
\end{align}
where (\ref{EQN:proof3.trace.matrix}) follows from the fact that both $S$ and $(J\Theta J - \Theta)$ are symmetric matrices.
Hence, if we write (\ref{EQN:proof3.trace.matrix}) by distinguishing the terms corresponding to each of the four blocks of $(J\Theta J - \Theta)$ as in (\ref{EQN:Theta.blocks}) we obtain,
\begin{align}
\nonumber
\tr(S\bar{\Theta})-\tr(S\Theta)
&=\frac{1}{2}\sum_{i=1}^{q}\sum_{j=1}^{q}
\{
s_{ij}(\theta_{i^{\prime}j^{\prime}}-\theta_{ij})
+s_{i^{\prime}j^{\prime}}(\theta_{ij}-\theta_{i^{\prime}j^{\prime}})
+s_{ij^{\prime}}(\theta_{i^{\prime}j}-\theta_{ij^{\prime}})
+s_{i^{\prime}j}(\theta_{ij^{\prime}}-\theta_{i^{\prime}j})
\}\\
\label{EQN:proof3.trace.decomposition}
&=\frac{1}{2}\sum_{i=1}^{q}\sum_{j=1}^{q}
\{
(s_{i^{\prime}j^{\prime}}-s_{ij})(\theta_{ij}-\theta_{i^{\prime}j^{\prime}})
+(s_{i^{\prime}j}-s_{ij^{\prime}})(\theta_{ij^{\prime}}-\theta_{i^{\prime}j})
\}\\
\label{EQN:proof3.trace.inequality}
&\leq
\frac{1}{2}\sum_{i=1}^{q}\sum_{j=1}^{q}
(
|s_{i^{\prime}j^{\prime}}-s_{ij}||\theta_{ij}-\theta_{i^{\prime}j^{\prime}}|
+|s_{i^{\prime}j}-s_{ij^{\prime}}||\theta_{ij^{\prime}}-\theta_{i^{\prime}j}|
).
\end{align}
Furthermore,
\begin{align}\label{EQN:proof3.penalty.decomposition}
\lambda_{2}
\left(
\norm{\Theta_{LL}-\Theta_{RR}}_{1}
+\norm{\Theta_{LR}-\Theta_{RL}}_{1}
\right)
=\sum_{i=1}^{q}\sum_{j=1}^{q}
\lambda_{2}\left(|\theta_{ij}-\theta_{i^{\prime}j^{\prime}}|
+|\theta_{ij^{\prime}}-\theta_{i^{\prime}j}|
\right).
\end{align}
Hence, from (\ref{EQN:proof3.trace.decomposition}),  (\ref{EQN:proof3.penalty.decomposition}) and then   (\ref{EQN:proof3.trace.inequality}) we
obtain
\begin{align*}
&=
 \tr(S\Theta)
-\tr(S\Theta)
-\lambda_2
\left(
\norm{\Theta_{LL}-\Theta_{RR}}_1
+\norm{\Theta_{LR}-\Theta_{RL}}_1
\right)\\
&=\frac{1}{2}\sum_{i=1}^{q}\sum_{j=1}^{q}
\{
 (s_{i^{\prime}j^{\prime}}-s_{ij})(\theta_{ij}-\theta_{i^{\prime}j^{\prime}})
+(s_{i^{\prime}j}-s_{ij^{\prime}})(\theta_{ij^{\prime}}-\theta_{i^{\prime}j})
-\sum_{i=1}^{q}\sum_{j=1}^{q}
\lambda_{2}
\left(
 |\theta_{ij}-\theta_{i^{\prime}j^{\prime}}|
+|\theta_{ij^{\prime}}-\theta_{i^{\prime}j}|
\right)\\
&\leq
\frac{1}{2}\sum_{i=1}^{q}\sum_{j=1}^{q}
(
|s_{i^{\prime}j^{\prime}}-s_{ij}||\theta_{ij}-\theta_{i^{\prime}j^{\prime}}|
+|s_{i^{\prime}j}-s_{ij^{\prime}}||\theta_{ij^{\prime}}-\theta_{i^{\prime}j}|
)
-\sum_{i=1}^{q}\sum_{j=1}^{q}
\lambda_{2}
\left(
 |\theta_{ij}-\theta_{i^{\prime}j^{\prime}}|
+|\theta_{ij^{\prime}}-\theta_{i^{\prime}j}|
\right)\\
&=
\sum_{i=1}^{q}\sum_{j=1}^{q}
\left\{
\left(|s_{i^{\prime}j^{\prime}}-s_{ij}|/2-\lambda_{2}\right)|\theta_{ij}-\theta_{i^{\prime}j^{\prime}}|
+\left(|s_{i^{\prime}j}-s_{ij^{\prime}}|/2-\lambda_{2}\right)|\theta_{ij^{\prime}}-\theta_{i^{\prime}j}|
\right\},
\end{align*}
that establishes (\ref{EQN:proof3.third.part}) because we have assumed that $\lambda_{2}$ satisfies the inequality of Theorem~\ref{THM:hat.fully.sym} so that both
$(|s_{i^{\prime}j^{\prime}}-s_{ij}|/2-\lambda_{2})$ and $(|s_{i^{\prime}j}-s_{ij^{\prime}}|/2-\lambda_{2})$ are smaller or equal to zero for every
$i,j\in L$.

\subsection{Proof of Proposition~\ref{THM:lambda.diag.solution}}\label{SEC:proof.diagonal.sol}
For a positive definite matrix $\Theta$ we denote by $\bar{\Theta}$ the diagonal matrix with the same diagonal as $\Theta$, that is $\bar{\Theta}=\diag(\Theta)$, and note that $\bar{\Theta}$ is positive definite by construction. Hence, we now show that if $\lambda_{1}\geq |s_{ij}|$ for every $i\neq j\in V$ then for  any positive definite matrix $\Theta$ it holds that
$\mathcal{L}_{\lambda_{1},\lambda_{2}}(\bar{\Theta})\leq \mathcal{L}_{\lambda_{1},\lambda_{2}}(\Theta)$ so that the matrix which minimises $\mathcal{L}_{\lambda_{1},\lambda_{2}}(\cdot)$ in (\ref{EQN:pdglasso}) must be diagonal. Concretely, we show that
$\mathcal{L}_{\lambda_{1},\lambda_{2}}(\bar{\Theta})- \mathcal{L}_{\lambda_{1},\lambda_{2}}(\Theta)\leq 0$ and, to this aim, we analyse, in turn, three components of this difference.

We first show that $\{-\log\det(\bar{\Theta})\}-\{-\log\det(\Theta)\}\leq 0$. The latter is equivalent to $\log\det(\Theta)\leq \log\det(\bar{\Theta})$, and therefore to $\det(\Theta)\leq \det(\bar{\Theta})$, which follows immediately from Hadamard's inequality; see, for instance, \citet[][p.~54 eqn (3)]{lutkepohl1996handbook}.

We now show that $\tr(S\bar{\Theta})-\tr(S\Theta)+\lambda_1\!\norm{\bar{\Theta}}_1-\lambda_1\!\norm{\Theta}_1\leq 0$. Firstly, we notice that
\begin{align*}
\tr(S\bar{\Theta})-\tr(S\Theta)
=\tr(S\bar{\Theta}-S\Theta)
=\tr\{S(\bar{\Theta}-\Theta)\}
=-\sum_{i=1}^{p}\sum_{\underset{j\neq i}{j=1}}^{p} s_{ij}\theta_{ij}
\leq \sum_{i=1}^{p}\sum_{\underset{j\neq i}{j=1}}^{p} |s_{ij}||\theta_{ij}|,
\end{align*}
where the third equality follows form the fact that both $S$ and $(\bar{\Theta}-\Theta)$ are symmetric matrices and, more specifically, that  $(\bar{\Theta}-\Theta)$ is the matrix obtained by setting to zero the diagonal entries of $-\Theta$. Similarly,
\begin{align*}
\lambda_1\!\norm{\bar{\Theta}}_1-\lambda_1\!\norm{\Theta}_1
=-\sum_{i=1}^{p}\sum_{\underset{j\neq i}{j=1}}^{p} \lambda_{1}|\theta_{ij}|;
\end{align*}
so that,
\begin{align*}
\tr(S\bar{\Theta})-\tr(S\Theta)+\lambda_1\!\norm{\bar{\Theta}}_1-\lambda_1\!\norm{\Theta}_1
&=-\sum_{i=1}^{p}\sum_{\underset{j\neq i}{j=1}}^{p} s_{ij}\theta_{ij}-\sum_{i=1}^{p}\sum_{\underset{j\neq i}{j=1}}^{p} \lambda_{1}|\theta_{ij}|\\
&\leq \sum_{i=1}^{p}\sum_{\underset{j\neq i}{j=1}}^{p} |s_{ij}||\theta_{ij}|-\sum_{i=1}^{p}\sum_{\underset{j\neq i}{j=1}}^{p} \lambda_{1}|\theta_{ij}|\\
&= \sum_{i=1}^{p}\sum_{\underset{j\neq i}{j=1}}^{p} (|s_{ij}|-\lambda_{1}) |\theta_{ij}|\leq 0,
\end{align*}
where the latter inequality holds true because, by assumption, $(|s_{ij}|-\lambda_{1})\leq 0$ for every $i\neq j\in V$.

Finally, we show that the difference
\begin{align}\label{EQN:proof1.third.part}
\lambda_2
\left(
\norm{\bar{\Theta}_{LL}-\bar{\Theta}_{RR}}_1
+\norm{\bar{\Theta}_{LR}-\bar{\Theta}_{RL}}_1
\right)
-\lambda_2
\left(
\norm{\Theta_{LL}-\Theta_{RR}}_1
+\norm{\Theta_{LR}-\Theta_{RL}}_1
\right)
\end{align}
is smaller or equal to zero. We first note that $\norm{\bar{\Theta}_{LR}-\bar{\Theta}_{RL}}_1=0$ because $\bar{\Theta}$ is diagonal and, next, that $\norm{\bar{\Theta}_{LL}-\bar{\Theta}_{RR}}_1-\norm{\Theta_{LL}-\Theta_{RR}}_1=-\sum_{i=1}^{p}\sum_{\underset{j\neq i}{j=1}}^{p} |\theta_{ij}-\theta_{i^{\prime}j^{\prime}}|$. Hence, (\ref{EQN:proof1.third.part}) can be written as
\begin{align*}
-\sum_{i=1}^{q}\sum_{\underset{j\neq i}{j=1}}^{q} \lambda_{2}|\theta_{ij}-\theta_{i^{\prime}j^{\prime}}|
-\sum_{i=1}^{q}\sum_{j=1}^{q}  \lambda_{2}|\theta_{ij^{\prime}}-\theta_{i^{\prime}j}|
\end{align*}
that is trivially non-positive because $\lambda_{2}\geq 0$, and this completes the proof.

\subsection{Proof of Proposition~\ref{THM:lambda.block.solution}}\label{SEC:proof.block.diagonal.sol}
%
For a positive definite matrix $\Theta$ we let $\bar{\Theta}$ be the block diagonal matrix with blocks $\Theta_{LL}$ and $\Theta_{RR}$. We remark that $\bar{\Theta}$ is positive definite by construction, and that the condition  $|s_{ij}|\leq \lambda_{1}$ for every $i\in L$ and $j\in R$ can be written as  $|s_{ij^{\prime}}|\leq \lambda_{1}$ for every $i,j\in L$. Hence, we now show that if $\lambda_{1}\geq |s_{ij^{\prime}}|$ for every $i,j\in L$ then for  any positive definite matrix $\Theta$ it holds that
$\mathcal{L}_{\lambda_{1},\lambda_{2}}(\bar{\Theta})\leq \mathcal{L}_{\lambda_{1},\lambda_{2}}(\Theta)$ so that the matrix which minimises $\mathcal{L}_{\lambda_{1},\lambda_{2}}(\cdot)$ in (\ref{EQN:pdglasso}) must be block diagonal with $\widehat{\Theta}_{LR}$ equal to zero. Concretely, we show that
$\mathcal{L}_{\lambda_{1},\lambda_{2}}(\bar{\Theta})- \mathcal{L}_{\lambda_{1},\lambda_{2}}(\Theta)\leq 0$ and, to this aim, we analyse, in turn, three components of this difference.

We first show that $\{-\log\det(\bar{\Theta})\}-\{-\log\det(\Theta)\}\leq 0$. The latter is equivalent to $\log\det(\Theta)\leq \log\det(\bar{\Theta})$, and therefore to $\det(\Theta)\leq \det(\bar{\Theta})$, which follows immediately from Fischers's inequality; see, for instance, \citet[][p.~54 eqn (5)]{lutkepohl1996handbook}.

We now show that $\tr(S\bar{\Theta})-\tr(S\Theta)+\lambda_1\!\norm{\bar{\Theta}}_1-\lambda_1\!\norm{\Theta}_1\leq 0$. Firstly, we notice that
\begin{align*}
\tr(S\bar{\Theta})-\tr(S\Theta)
=\tr(S\bar{\Theta}-S\Theta)
=\tr\{S(\bar{\Theta}-\Theta)\}
=-2\sum_{i=1}^{q} \sum_{j=1}^{q} s_{ij^{\prime}}\theta_{ij^{\prime}}
\leq 2\sum_{i=1}^{q} \sum_{j=1}^{q}  |s_{ij^{\prime}}||\theta_{ij^{\prime}}|,
\end{align*}
where the third equality follows form the fact that both $S$ and $(\bar{\Theta}-\Theta)$ are symmetric matrices and, more specifically, that  both $(\bar{\Theta}-\Theta)_{LL}$ and $(\bar{\Theta}-\Theta)_{RR}$ are equal to zero. Similarly,
\begin{align*}
\lambda_1\!\norm{\bar{\Theta}}_1-\lambda_1\!\norm{\Theta}_1
=- 2\sum_{i=1}^{q} \sum_{j=1}^{q}  \lambda_{1}|\theta_{ij^{\prime}}|.
\end{align*}
Hence,
\begin{align*}
\tr(S\bar{\Theta})-\tr(S\Theta)+\lambda_1\!\norm{\bar{\Theta}}_1-\lambda_1\!\norm{\Theta}_1
&=-2\sum_{i=1}^{q} \sum_{j=1}^{q} s_{ij^{\prime}}\theta_{ij^{\prime}}- 2\sum_{i=1}^{q} \sum_{j=1}^{q}  \lambda_{1}|\theta_{ij^{\prime}}|\\
&\leq 2\sum_{i=1}^{q} \sum_{j=1}^{q}  |s_{ij^{\prime}}||\theta_{ij^{\prime}}|- 2\sum_{i=1}^{q} \sum_{j=1}^{q}  \lambda_{1}|\theta_{ij^{\prime}}|\\
&= 2\sum_{i=1}^{q} \sum_{j=1}^{q} (|s_{ij^{\prime}}|-\lambda_{1}) |\theta_{ij^{\prime}}|\leq 0
\end{align*}
where the latter inequality holds true because, by assumption, $(|s_{ij^{\prime}}|-\lambda_{1})\leq 0$ for every $i,j\in L$.

In order to complete the proof we have now to show that the difference
\begin{align}\label{EQN:proof2.third.part}
\lambda_2
\left(
\norm{\bar{\Theta}_{LL}-\bar{\Theta}_{RR}}_1
+\norm{\bar{\Theta}_{LR}-\bar{\Theta}_{RL}}_1
\right)
-\lambda_2
\left(
\norm{\Theta_{LL}-\Theta_{RR}}_1
+\norm{\Theta_{LR}-\Theta_{RL}}_1
\right)
\end{align}
is smaller or equal to zero. This can be shown by noting that $\norm{\bar{\Theta}_{LR}-\bar{\Theta}_{RL}}_1=0$ because $\bar{\Theta}$ is block diagonal, and that, by construction,
$\norm{\bar{\Theta}_{LL}-\bar{\Theta}_{RR}}_{1}=\norm{\Theta_{LL}-\Theta_{RR}}_{1}$ so that
$\norm{\bar{\Theta}_{LL}-\bar{\Theta}_{RR}}_1-\norm{\Theta_{LL}-\Theta_{RR}}_1=0$. Hence (\ref{EQN:proof2.third.part}) simplifies to   $-\lambda_2\norm{\Theta_{LR}-\Theta_{RL}}_1$ that is trivially non-positive because $\lambda_{2}\geq 0$.

\section{Examples of \PDRCON\ models}\label{SEC:appendix.examples}
\newcommand{\C}{C.}
\renewcommand{\theequation}{\C\arabic{equation}}
\renewcommand{\thetable}{\C\arabic{table}}
\renewcommand{\thefigure}{\C\arabic{figure}}
\setcounter{equation}{0}

We present here some examples of \PDRCON\ models with a detailed description  of the different types of symmetry they include and of the relevant equality constraints. All the models considered involve $p=6$ variables so that  $L=\{1, 2, 3\}$ and $R=\{1^{\prime}, 2^{\prime}, 3^{\prime}\}$ and, for each of them, we give both the coloured graph and the concentration matrix. Recall that the coloured graph representing a \PDRCON\ model may contain two types of vertices and edges, namely, coloured and uncoloured.  In order to make our graphs readable also in black and white printing, the colour white is used to denote uncoloured vertices whereas coloured vertices are in gray. On the other hand, uncoloured and coloured edges are represented by thin and thick black lines, respectively. Shaded areas are used to highlight the subgraphs $\G_{L}$ and $\G_{R}$ relative to the two groups. Finally, in our representation of the concentration matrices, only the upper-triangular part is given and the entries involved in equality constraints are in bold.

%
\begin{exmp}[Figure~\ref{FIG:app.example.01}]\label{EXA:app.1}
\begin{figure}[htb]
  \centering \includegraphics[width=0.9\textwidth]{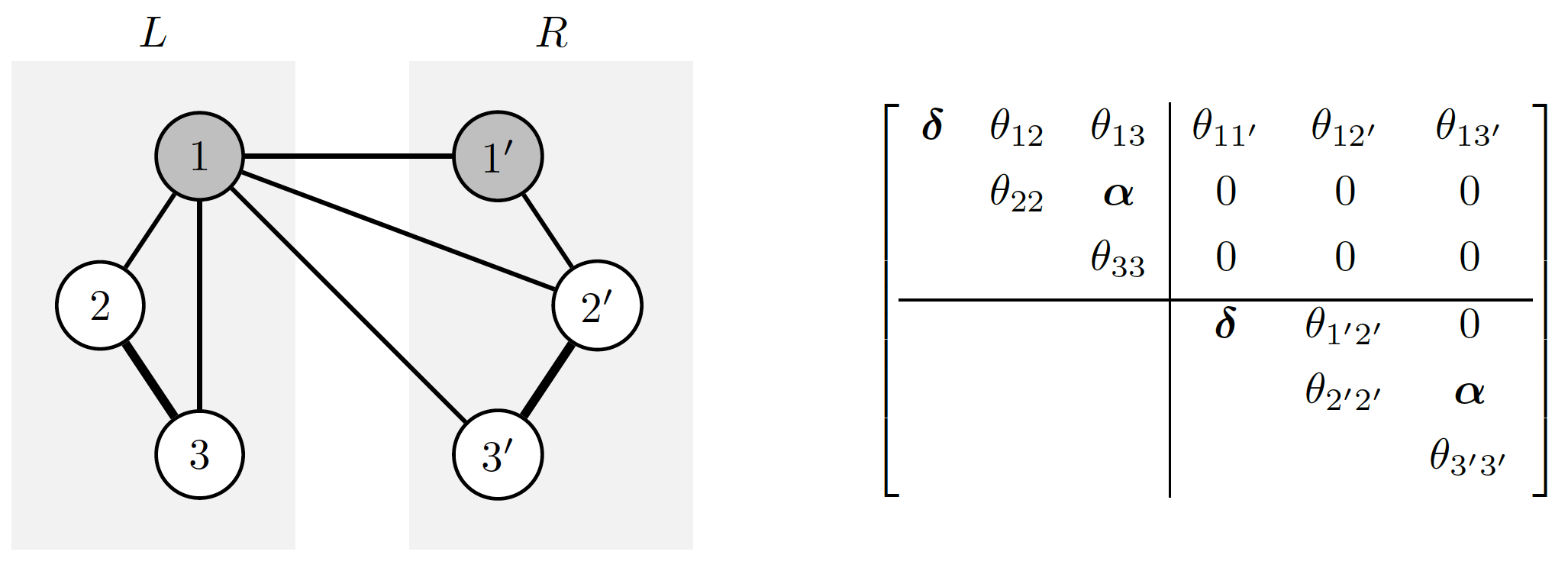}
  \caption{Coloured graph and concentration matrix of the \PDRCON\ model in Example~\ref{FIG:app.example.01}}
   \label{FIG:app.example.01}
\end{figure}
In the \PDRCON\ model of Figure~\ref{FIG:app.example.01} the edges $\{2, 3\}\in \G_{L}$ and $\{2^{\prime}, 3^{\prime}\}\in\G_{R}$ are both present and such that $\theta_{23}=\theta_{2^{\prime}3^{\prime}}=\alpha$; that is, they form an inside-block parametric symmetry. On the other hand the edges $\{1, 2\}\in \G_{L}$ and $\{1^{\prime}, 2^{\prime}\}\in \G_{R}$ are both present, thereby forming an inside-block structural symmetry, but
not a parametric symmetry because
the corresponding concentrations $\theta_{12}$ and $\theta_{1^{\prime}2^{\prime}}$ are not constrained to be equal. In this model there is also one vertex symmetry, encoded by the equality constraints $\theta_{11}=\theta_{1^{\prime}1^{\prime}}=\delta$, whereas
the only existing across-block symmetries  are those that involve missing edges, and thus zero concentrations.
\end{exmp}

\begin{exmp}[Figure~\ref{FIG:app.example.02}]\label{EXA:app.2}
\begin{figure}[htb]
  \centering  \includegraphics[width=0.95\textwidth]{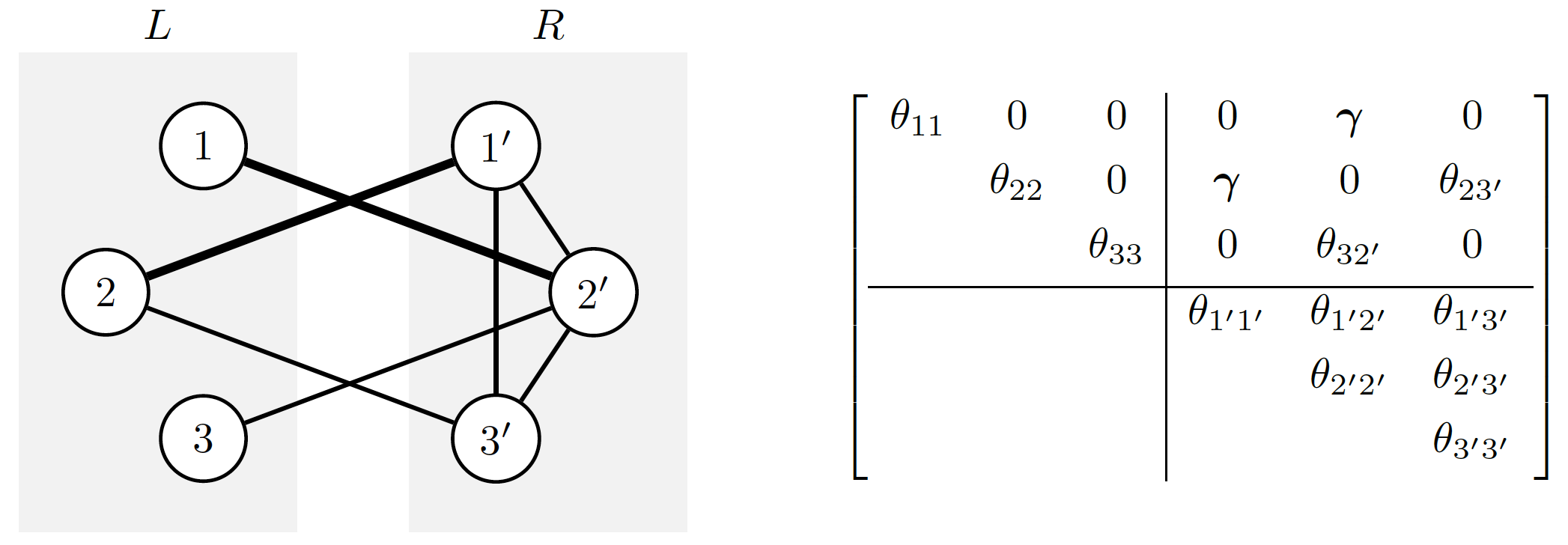}
  \caption{Coloured graph and concentration matrix of the \PDRCON\ model in Example~\ref{FIG:app.example.02}}\label{FIG:app.example.02}
\end{figure}
In the \PDRCON\ model of Figure~\ref{FIG:app.example.02} the edges $\{1, 2^{\prime}\}$ and $\{2, 1^{\prime}\}$ are both present and
form the inside-block parametric symmetry with $\theta_{12^{\prime}}=\theta_{21^{\prime}}=\gamma$. On the other hand, the edges
$\{2, 3^{\prime}\}$ and $\{3, 2^{\prime}\}$  are both present, thereby forming an across-block structural symmetry, but not a parametric symmetry because the corresponding concentrations $\theta_{23^{\prime}}$ and $\theta_{32^{\prime}}$ are not constrained to be equal. In this model, there are neither vertex symmetries nor inside-block symmetries. Indeed, the inner structure of the two groups is very different because $\G_{L}$ is a fully disconnected whereas $\G_{R}$ is complete.
\end{exmp}

\begin{exmp}[Figure~\ref{FIG:app.example.03}]\label{EXA:app.3}
\begin{figure}[htb]
  \centering \includegraphics[width=0.95\textwidth]{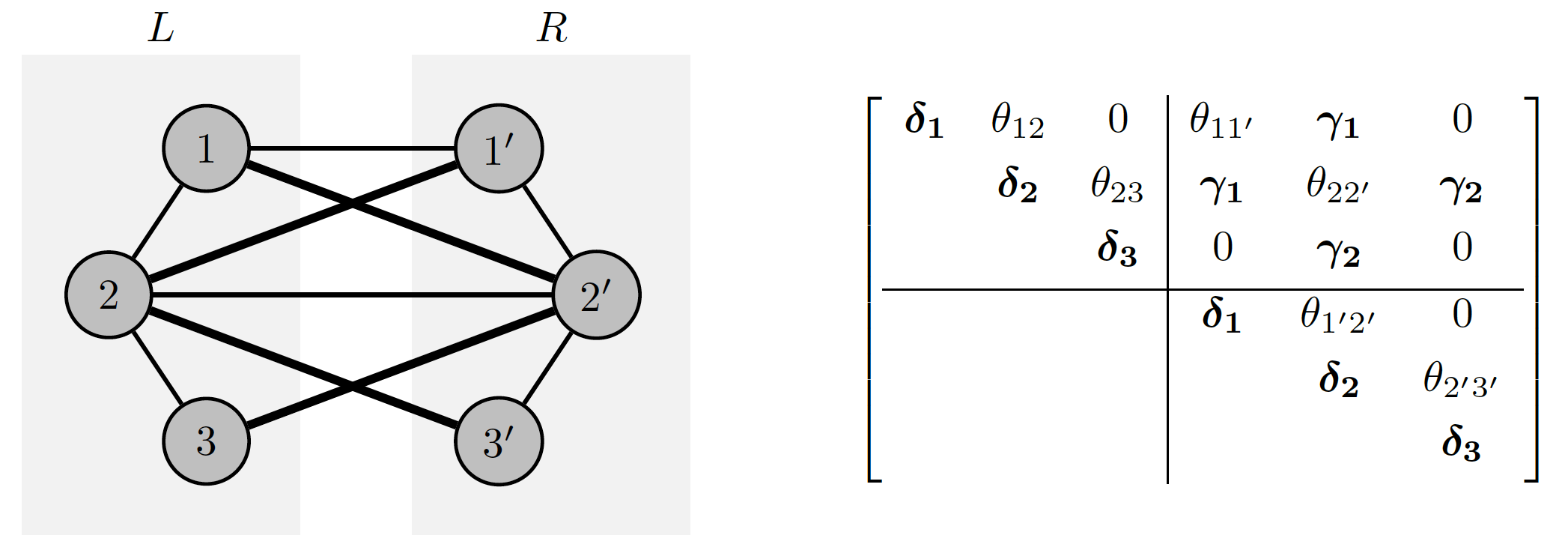}
  \caption{Coloured graph and concentration matrix of the \PDRCON\ model in Example~\ref{FIG:app.example.03}}\label{FIG:app.example.03}
\end{figure}
The \PDRCON\ model of Figure~\ref{FIG:app.example.03} shows a large amount of symmetry. More specifically, there is (i) full vertex symmetry with $\theta_{11}=\theta_{1^{\prime}1^{\prime}}=\delta_{1}$,
$\theta_{22}=\theta_{2^{\prime}2^{\prime}}=\delta_{2}$ and
$\theta_{33}=\theta_{3^{\prime}3^{\prime}}=\delta_{3}$;
(ii) full across-block parametric symmetry with $\theta_{12^{\prime}}=\theta_{2 1^{\prime}}=\gamma_{1}$ and
$\theta_{23^{\prime}}=\theta_{3^{\prime}2}=\gamma_{2}$ and, finally, (iii) there is full inside-block structural symmetry. However, there are no inside-block parametric symmetries involving present edges, and therefore this is not a fully symmetric model.
\end{exmp}

\begin{exmp}[Figure~\ref{FIG:app.example.04}]\label{EXA:app.4}
\begin{figure}[htb]
  \centering \includegraphics[width=0.95\textwidth]{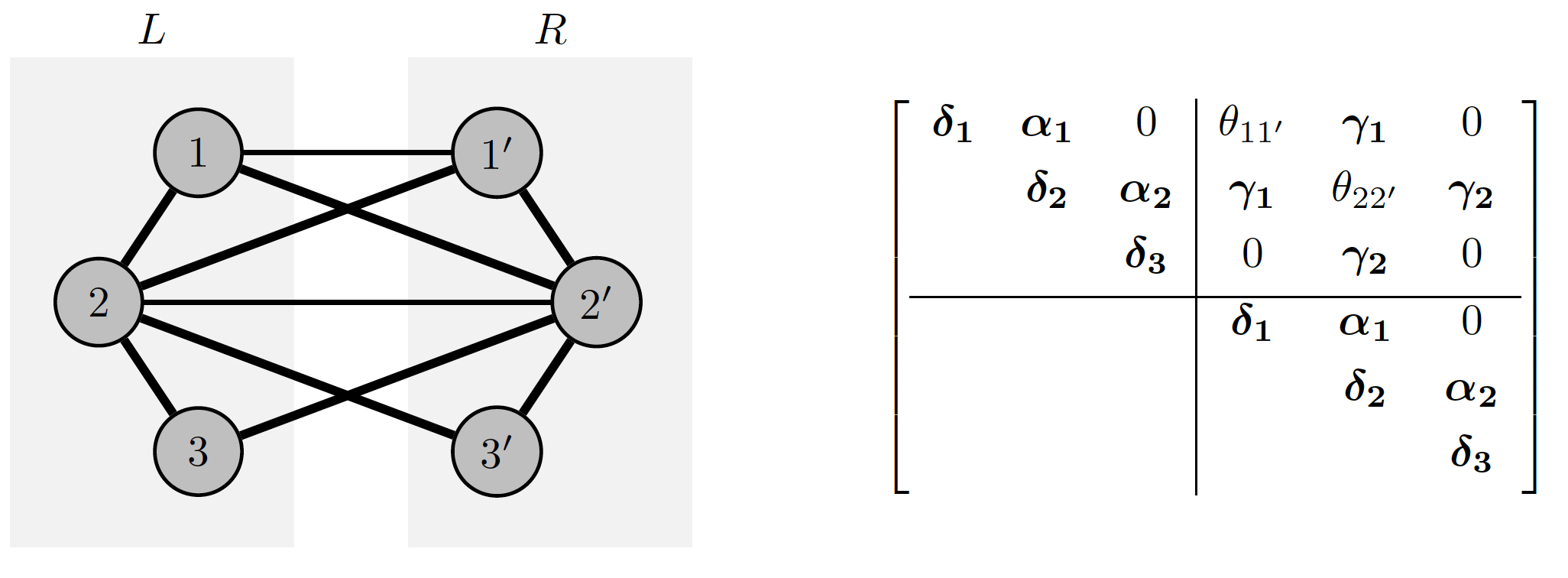}
  \caption{Coloured graph and concentration matrix of the \PDRCON\ model in Example~\ref{FIG:app.example.04}}\label{FIG:app.example.04}
\end{figure}
The structure of the graph in Figure~\ref{FIG:app.example.04} is the same as that considered in Example~\ref{EXA:app.3}, but in this case there are additional equality constraints so that full parametric symmetry is achieved. In may be worth noting that, although this is a fully symmetric model, thin lines are used to depict the edges $\{1, 1^{\prime}\}$ and $\{2, 2^{\prime}\}$. Indeed, these edges are associated with the  diagonal entries of $\Theta_{LR}$ and, in \PDRCON\ models,  these parameters are not considered for possible equality constraints.
\end{exmp}

\section{Plots of the two selected models for the breast cancer analysis of Section~\ref{SEC:application}}\label{SEC:appendix.plots}
\newcommand{\D}{D.}
\renewcommand{\theequation}{\D\arabic{equation}}
\renewcommand{\thetable}{\D\arabic{table}}
\renewcommand{\thefigure}{\D\arabic{figure}}
\setcounter{figure}{0}

Visual depiction of the two graphs associated to: (i) pdRCON model with forced full vertex symmetry and penalization on inside- and across- blocks (\texttt{mod$_{\text{fV}}$}, Figure~\ref{FIG:appendix.plots.fV}); (ii) fully symmetric pdRCON model  (\texttt{mod$_{\text{fVIA}}$}, Figure~\ref{FIG:appendix.plots.fVIA}). In both plots, the character coding is the following:  \emph{empty circle}, for structural non-zero symmetries (unicode character U+25CB, white circle); \emph{full circle}, for parametric non-zero symmetries (unicode character U+25CF, black circle); \emph{diagonal slash} for \textbf{a}symmetric edges (unicode character U+2571, box drawings light diagonal upper right to lower left). Labels for columns and rows end with the string ``\textunderscore T'' if the genes are measured on tumor tissues and ``\textunderscore H'' if measured on healthy tissues. Lower triangular portion is not shown due to the symmetric nature of the matrix.

\begin{figure}[ht!]
\centering
\includegraphics[trim={3cm 3cm 0 3cm}, clip, width=1\textwidth]{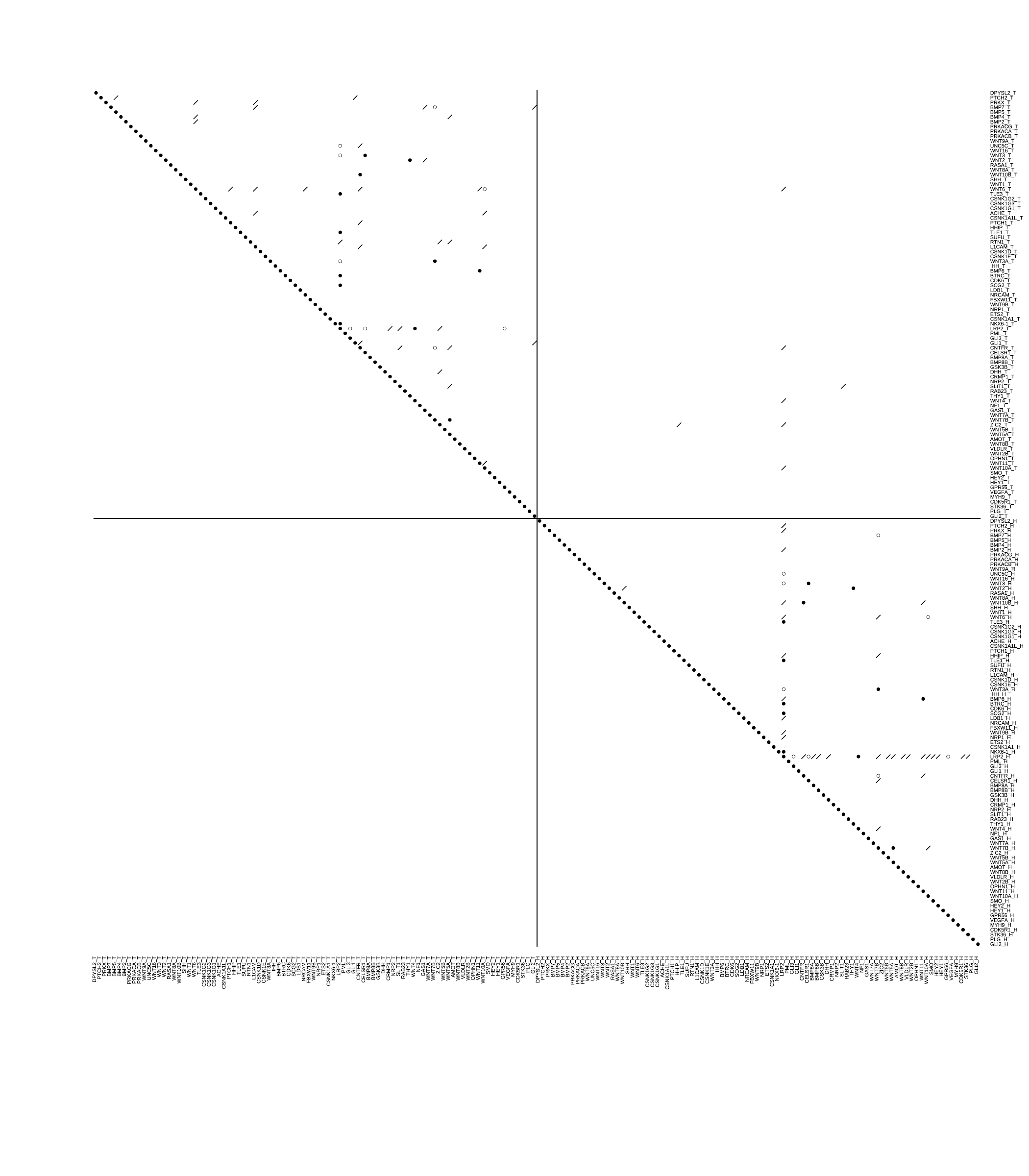}
\caption{Depiction of the graph associated to model \texttt{mod$_{\text{fV}}$}, selected according to eBIC ($\gamma=0.5$), and fit on the breast cancer data of Section~\ref{SEC:application}; for row and column labels, ``\textunderscore T'' denotes tumor tissue and ``\textunderscore H'' healthy tissue; character coding: \emph{empty circle}, structural non-zero symmetries; \emph{full circle}, parametric non-zero symmetries; \emph{diagonal slash}, \textbf{a}symmetric edges.}
\label{FIG:appendix.plots.fV}
\end{figure}

\begin{figure}[ht!]
\centering
\includegraphics[trim={3cm 3cm 0 3cm}, clip, width=1\textwidth]{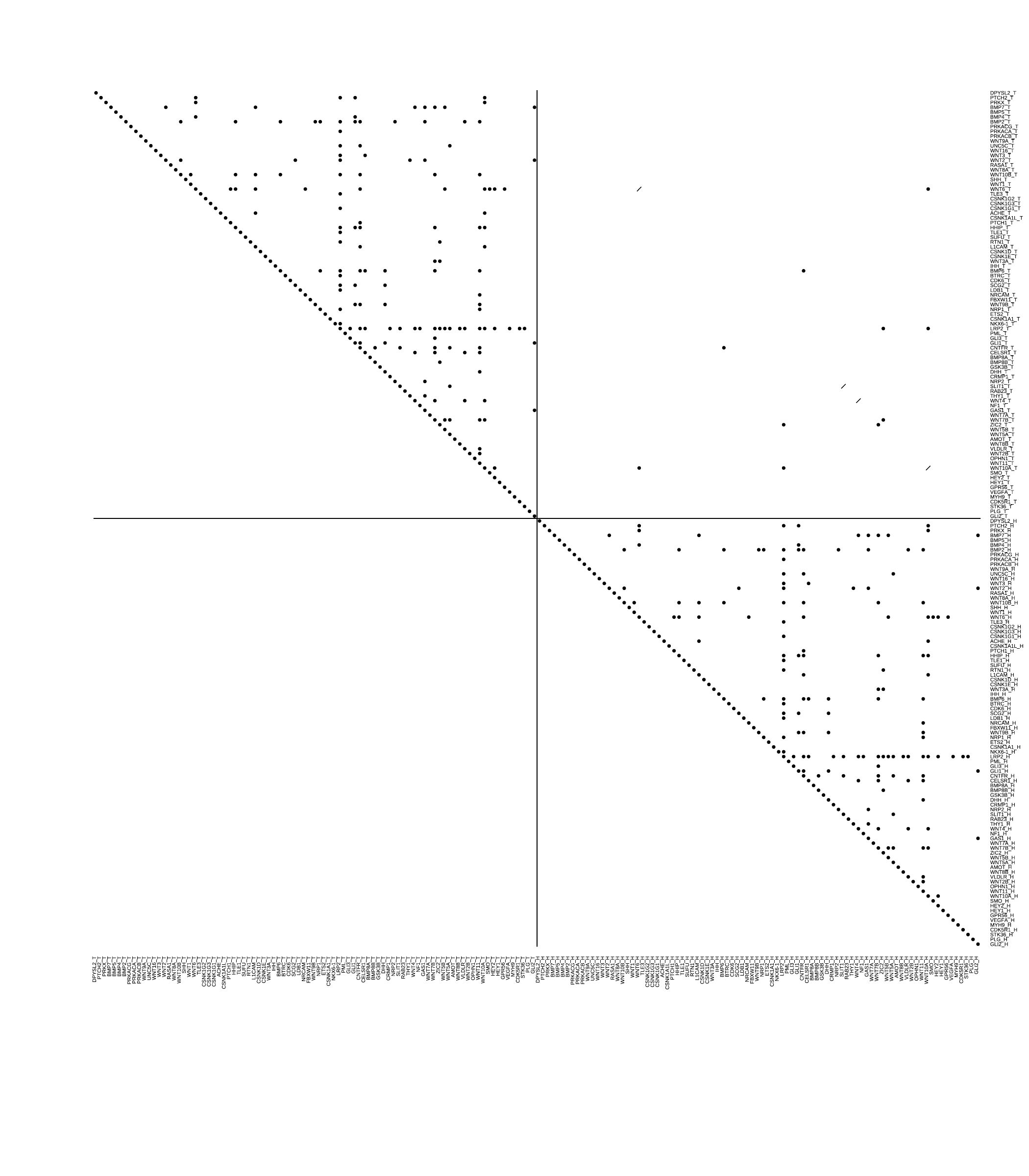}
\caption{Depiction of the graph associated to model \texttt{mod$_{\text{fVIA}}$} (fully symmetric model), selected according to eBIC ($\gamma=0.5$), and fit on the breast cancer data of Section~\ref{SEC:application}; for row and column labels, ``\textunderscore T'' denotes tumor tissue and ``\textunderscore H'' healthy tissue; character coding: \emph{empty circle}, structural non-zero symmetries; \emph{full circle}, parametric non-zero symmetries; \emph{diagonal slash}, \textbf{a}symmetric edges.}
\label{FIG:appendix.plots.fVIA}
\end{figure}

\end{document}